\theoremstyle{plain}
\newtheorem{thm}{Theorem}[section]
\newtheorem{lemma}[thm]{Lemma}
\theoremstyle{definition}
\numberwithin{equation}{section}
\begin{document}

\title{Executing large orders in a microscopic market model}
\author{Alexander Wei\ss{}}
\address{Weierstrass Institute for Applied Analysis and Stochastics\\
		 Mohrenstrasse 39\\
		 10117 Berlin\\
		 Germany
}
\email{alexander.weiss@wias-berlin.de}
\thanks{Work supported by the DFG Research Center MATHEON}
\date{\today}
\keywords{market micro structure, illiquid markets, optimal trading strategies}
\subjclass[2000]{91B24 62P05}
\begin{abstract}
In a recent paper, Alfonsi, Fruth and Schied (AFS) propose a simple order book based model for the impact of large orders on stock prices. They use this model to derive optimal strategies for the execution of large orders. We apply these strategies to an agent-based stochastic order book model that was recently proposed by Bovier, \v{C}ern\'{y} and Hryniv, but already the calibration fails. In particular, from our simulations the recovery speed of the market after a large order is clearly dependent on the order size, whereas the AFS model assumes a constant speed. For this reason, we propose a generalization of the AFS model, the GAFS model, that incorporates this dependency, and prove the optimal investment strategies. As a corollary, we find that we can derive the ``correct'' constant resilience speed for the AFS model from the GAFS model such that the optimal strategies of the AFS and the GAFS model coincide. Finally, we show that the costs of applying the optimal strategies of the GAFS model to the artificial market environment still differ significantly from the model predictions, indicating that even the improved model does not capture all of the relevant details of a real market.
\end{abstract}

\maketitle
\section{Introduction}
For a long time, financial mathematics mainly focused on asset pricing, but the scope has been extended in the last years. One of the current topics of interest is the theory of optimal trading strategies for the execution of large orders. Here, a trader would like to purchase\footnote{In this article, we focus on a trader {\it purchasing} shares, since all models mentioned here either consider only this part of the problem or work symmetrically for buyers and sellers.} a huge volume of shares up to time $T$. Since the supply of limit orders for a certain price is limited, the trader will not be able to trade the whole order for the current price, but he or she will suffer from an adverse price movement. This additional price impact, induced by the trader's own trading, can be lessened if he or she gives the market time to recover; the {\it best price} returns to previous levels. However, the time interval $[0,T]$ is assumed to be too short in order to wait for a full recovery of the market. The {\it optimal execution problem} asks for the optimal splitting and the optimal trading times to minimize the expected price impact.

There have been several models to solve the optimal execution problem, motivated by empirical findings (for references see next paragraph); yet, since we do not know if these models capture all relevant features of real markets, we cannot be sure that the strategies work in reality, and tests on real markets would be an expensive experiment. For this reason, {\it microscopic market models} are an excellent tool for testing theoretical models of optimal trading strategies. Based on assumptions about the market participants' behavior, these models simulate the trading of financial assets on the level of single traders or orders \cite{giardina03,smith03,bovier06}. The emerging price processes show typical features of real markets \cite{cont01,potters03,almgren05}. Hence, microscopic models provide artificial, yet reasonable, market environments that allow for applying optimal trading strategies without costs or risk, comparing the numerical results with the theoretical expectations and resolving deviations by an improvement of the underlying market assumptions with respect to the empirical findings. In this paper, we exemplify how this approach can improve a solution for the optimal execution problem.

All approaches to the optimal execution problem rely on two empirical findings that have been validated in many studies (see \cite{schoeneborn08}, pp. 3, for a list of references): First, a large order has an impact on its price; second, this impact decreases in time, but it does not vanish completely. That implies the costs of all subsequent orders are influenced by the impact of a large order. These two effects are called {\it temporary} and {\it permanent impact}. Many models implement these observations straightly \cite{bertsimas98,almgren01,huberman05}: They consider a stochastic process that simulates the current best price evolving independently from the large trader's action in time, and two functions mapping the volume of a large order to the temporary or, respectively, permanent impact. When a large order is executed, the corresponding impacts are just added to the price. Yet, it is doubtful if the complex dynamics of limit order books ({\it LOB}), which underlie most modern markets, can be captured by looking at the best price only. Therefore, recent models attempt to take the dynamics of the whole order book into account. Obizhaeva and Wang introduced a model with an underlying block shaped LOB and calculated the optimal trading strategy in terms of a recursive formula by applying Bellman equations \cite{obizhaeva05}. Alfonsi, Fruth and Schied introduced a generalization of this model for general order book shapes and gave an explicit solution for the optimal trading strategy with respect to their market model (introduced in \cite{alfonsi09} and revisited in \cite{alfonsi09b}); this model is the one we will test in a microscopic market environment, and we refer to it as the {\it AFS model}. 

The AFS model describes the underlying market by two parameters: The shape of the (continuous) LOB given by a {\it shape function} $f$ and a positive constant $\rho$ expressing the {\it resilience speed} of the order book. There are two versions of the model: In the first one, the consumed volume recovers exponentially fast; in the second version, the best price recovers in this way. The shape of the order book is {\it static} such that there is a bijection between the impact on the best price and on the volume. Thus, the response of the order book to the execution of a large order depends on the current price impact only, but not on possible executions before.

To test the optimal AFS strategies, we need to select a microscopic market model. The model that serves best as virtual market environment was introduced by Bovier, \v{C}erny and Hryniv and is called the {\it Opinion Game} \cite{bovier06}. It simulates a family of traders on the level of a {\it generalized order book}. The underlying idea of the generalized order book is that every trader has an individual, subjective opinion about the current {\it fair} price. Instead of orders, the generalized order book records these opinions. Thus, it also captures traders who are willing to trade for a price close to the best quotes but have not placed public orders (in some markets it is also possible to place hidden or partially visible orders \cite{frey09, bessembinder09}). These traders offer {\it hidden liquidity}; they will influence the price impact when an order is executed but do not appear in the order book \cite{weber05}. Thus, the Opinion Game provides a more realistic market response to orders than classical order book models.

In order to apply the AFS strategies in the Opinion Game, we have to determine the correct values for $f$ and $\rho$. There are several problems to find the value for $\rho$. First, the AFS model does not assume a permanent impact; second, the market recovery is only poorly approximated by an exponential function; third, $\rho$ does not exist as a constant value but depends on the traded volume. While the first two items can be bypassed, the third item strongly conflicts with the assumptions of the AFS model. For this reason we introduce a generalization of the AFS model that we call the {\it generalized AFS model} or {\it GAFS model}. The GAFS model substitutes $\rho$ by $\bar{\rho}$ that is a function of an order's price impact or volume impact, depending on the model version. Furthermore, we extend the results of the AFS model by proving that there exists a unique, deterministic optimal trading strategy for the GAFS model. It turns out that, although $\bar{\rho}$ is a function, the optimal strategy evaluates it for one value only. Consequently, the optimal strategies of the AFS and the GAFS models coincide when $\rho$ is chosen to be this value. In this sense, the AFS model is also sufficient for the order impact dependent case, but the GAFS model is needed to calibrate it correctly.

After calibrating the (G)AFS model to the Opinion Game, we calculate the optimal strategies for several parameter sets, apply these strategies to the Opinion Game, and sample their impact costs. On a general level, the sampled costs show the expected {\it natural} behavior; for instance, the costs decrease if the available trading time $T$ or the number of trading opportunities within $[0,T]$ become larger. Furthermore, the simulations reinforce the advantages of the GAFS model compared to the AFS model. We show that the AFS model performs worse than the GAFS model for a {\it bad}, yet reasonable, choice of the value for $\rho$. On the other hand, we find that, in comparison to the predicted costs, the sampled costs of the GAFS strategies are up to four times higher. This shows that the (G)AFS model does not capture all relevant details of the Opinion Game's order book dynamics, indicating that the optimal (G)AFS strategies could also perform worse than theoretically expected on real markets.

In Section \ref{sec:AFSmodel}, we introduce the AFS model and restate its optimal trading strategies. In Section \ref{sec:opinionGame}, we present this version of the Opinion Game that we used to analyze the AFS model. In Section \ref{sec:parameters}, we determine $f$ and $\rho$ in the Opinion Game, which leads to the GAFS model. Finally, in Section \ref{sec:numerics}, we apply the GAFS optimal strategies in the Opinion Game, and compare the resulting costs for several parameter sets. Furthermore, we show that the GAFS strategies perform better than the AFS strategies with an suboptimal choice of $\rho$ in the Opinion Game.

\section{The market model of Alfonsi, Fruth and Schied and its optimal execution strategies}\label{sec:AFSmodel}
A trader would like to purchase $X_0>0$ shares within a time period $[0,T]$, $T>0$. $X_0$ is assumed to be large such that the trader's order has an impact on the price and the underlying limit order book. We will refer to this trader as {\it large trader} in the following. Because we consider a buy order, we first define how the {\it upper part} of the LOB, which contains the sell limit orders, is modeled. As long as the large trader does not take action, the LOB is described by the {\it unaffected best ask price} $A^0:=(A^0_t)_{t\geq0}$ and by a {\it shape function} $f:\mathbb{R}\to (0,\infty)$ (see Figure \ref{fig:AFSLob}). $A^0$ is a martingale on a given filtered probability space $(\Omega, (\mathscr{F}_t)_{t\geq0},\mathscr{F},P)$ satisfying $A^0_0=A_0$ for some $A_0\in\mathbb{R}$; $f$ is a continuous function. The amount of shares available for a price $A^0_t+x$, $x\geq 0$, at time $t$ is then given by $f(x)dx$. Notice that the shape of the order book with respect to the best ask price is static.

\begin{figure}
\includegraphics[width=.9\textwidth]{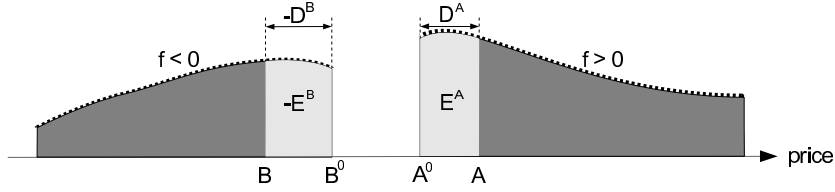}
\caption{The order book of the AFS model. For simplicity, we have left off the time index $t$.}
\label{fig:AFSLob}
\end{figure}

Now, assume the large trader acts for the first time and purchases $x_0$ shares at time $t_0$; he or she consumes all shares between $A^0_{t_0}$ and $A^0_{t_0}+D^A_{t_0+}$, $D^A_{t_0+}$ being uniquely determined by
\begin{equation}
\int_0^{D^A_{t_0+}}f(x)dx=x_0.
\end{equation}
$D^A:=(D^A)_{t\geq 0}$ is called the {\it extra spread} caused by the large trader. In general, if we know $D^A_{t_n}$ for a trading time $t_n$, $D^A_{t_n+}$ is given by
\begin{equation}\label{eq:dtrade}
\int_{D^A_{t_n}}^{D^A_{t_n+}}f(x)dx=x_n
\end{equation}
whereby $x_n$ is the amount of shares traded at time $t_n$. The large trader is inactive between two trading times, $t_n$ and $t_{n+1}$, and the extra spread recovers. For the exact way of recovery there are two versions considered. To conform to the notation of \cite{alfonsi09}, we first state {\it Version 2}. In this case, $D^A_t$ is defined for $t\in (t_n,t_{n+1}]$ by
\begin{equation}\label{eq:eq2mod2}
D^A_t := e^{-\rho(t-t_n)}D^A_{t_n+}.
\end{equation}
The parameter $\rho$ is a positive constant called the {\it resilience speed}. To complete the definition, we set $D^A_t :=0$ for $t\leq t_0$. Now, we can introduce the {\it best ask price} $A := (A_t)_{t\geq 0}$ by
\begin{equation}
A_t := A^0_t + D^A_t.
\end{equation}
In contrast to $A^0$, $A$ includes the large trader's impact. In particular, the amount of shares available for a price $A^0_t+x$ at time $t$ is given by
\begin{equation}
\left\{\begin{array}{ll}
f(x)dx&\textrm{for }x\geq A_t-A^0_t\\
0&\textrm{otherwise}
\end{array}\right. .
\end{equation}
In other words, every trader in the market experiences the large trader's impact after time $t_0$.

The {\it price impact} $D^A$ can also be expressed in terms of the {\it impact on the volume} $E^A:=(E^A_t)_{t\geq 0}$. Because the shape function $f$ is strictly positive, there is a one-to-one relation between $E^A$ and $D^A$. Given $D^A$, the process $E^A$ is defined by
\begin{equation}
E^A_t := \int_0^{D^A_t}f(x)dx.
\end{equation}
We can generally introduce the antiderivative of $f$,
\begin{equation}
F(x) := \int_0^x f(x)dx,
\end{equation}
to get the relations
\begin{equation}\label{eq:relationDE}
\begin{array}{ccc}
E^A_t = F(D^A_t)&\textrm{and}&D^A_t=F^{-1}(E^A_t).
\end{array}
\end{equation}
By (\ref{eq:dtrade}) and (\ref{eq:relationDE}), we easily conclude
\begin{equation}\label{eq:etrade}
E^A_{t_n+} = E^A_{t_n}+x_n.
\end{equation}
This motivates to define {\it Version 1}, in which we first define $E^A$ and then derive $D^A$ by relation (\ref{eq:relationDE}). We set $E^A_t := 0$ for $t\in [0,t_0]$ and
\begin{equation}\label{eq:eq2mod1}
E^A_t := e^{-\rho(t-t_n)}E^A_{t_n+},\ t\in(t_n,t_{n+1}].
\end{equation}
The equations (\ref{eq:etrade}) and (\ref{eq:eq2mod1}) define $E^A$ completely.

Summarizing, we have introduced two versions of the AFS model: In Version 1, we define the volume impact $E^A$ and assume that it recovers exponentially fast between the large trader's orders. $D^A$ is then derived from $E^A$ by relation (\ref{eq:relationDE}); in Version 2, we first define the price impact $D^A$, assume an exponentially fast recovery and derive $E^A$ from it. Observe that the AFS model recovers completely as the time tends to infinity (and if no more large orders are executed after some finite time); there is no permanent impact. Furthermore, the assumption of an exponential decay of the price impact is under discussion in the scientific community. An empirical study in \cite{bouchaud04} suggests a power-law decay. On a theoretic level, Gatheral proved for a market model similar to the AFS model that an exponential decay can easily imply arbitrage opportunities while power-law decays do not show this undesired property \cite{gatheral09}. Alfonsi and Schied, however, were able to show in \cite{alfonsi09b} that, despite of the similarity to Gatheral's model, the AFS model does not give arbitrage opportunities (under mild assumptions concerning the shape of the order book). The final answer to the question how the decay of the price impact is modeled best has not been given yet. As far as we know, the same question for the volume impact has not been treated.

We cannot exclude a priori that it is reasonable to sell shares and to buy them back later. Thus, we also have to model the impact of (large) sell orders on the LOB. Such orders will be written as orders with negative sign. Let $B^0=(B^0_t)_{t\geq 0}$ be the {\it unaffected best bid price} with
\begin{equation}
B^0_t\leq A^0_t\textrm{ for all }t\geq 0
\end{equation}
as only constraint for its dynamics. The {\it lower part} of the LOB is modeled by the shape function $f$ on the negative part of its domain. More precisely, the number of bids for the price $B^0_t+x$, $x<0$, is given by $f(x)dx$. As before, we can now introduce the {\it extra spread} $D^B:=(D^B_t)_{t\geq0}$. Given a sell order $x_n <0$, a trading time $t_n$, and $D^B_{t_n}$, $D^B_{t_n+}$ is implicitly defined by
\begin{equation}
\int_{D^B_{t_n}}^{D^B_{t_n+}} f(x)dx = x_n.
\end{equation}
Note that $D^B$ is non-positive. We equivalently define the {\it impact on the volume} $E^B:=(E^B_t)_{t\geq0}$ by
\begin{equation}
E^B_{t_n+}:=E^B_{t_n}+x_n.
\end{equation}
$E^B$ is also non-positive, and its connection to $D^B$ is again given by (\ref{eq:relationDE}). To complete the definitions for sell orders, we set $D^B_t:=0$ and $E^B_t:=0$ for all $t\leq t_0$, and
\begin{equation}\label{eq:eq2mod3}
\left\{\begin{array}{ll}
E^B_t := e^{-\rho (t-t_n)}E^B_{t_n+}&\textrm{for Version 1}\\
\ \\
D^B_t := e^{-\rho (t-t_n)}D^B_{t_n+}&\textrm{for Version 2}
\end{array}\right.
\textrm{ for }t\in(t_n,t_{n+1}],
\end{equation}
whereby $t_n$ and $t_{n+1}$ are two successive trading times of the large trader.

Now that all orders are well defined, we introduce the {\it cost} of a large order $x_{t_n}$ at some trading time $t_n$ by
\begin{equation}
\pi_{t_n}(x_{t_n}):=\left\{\begin{array}{ll}
\int_{D^A_{t_n}}^{D^A_{t_n+}}(A^0_{t_n}+x)f(x)dx&\textrm{ for a buy market order $x_{t_n}\geq 0$}\\
\\
\int_{D^B_{t_n}}^{D^B_{t_n+}}(B^0_{t_n}+x)f(x)dx&\textrm{ for a sell market order $x_{t_n}< 0$}
\end{array}\right. .
\end{equation}

We assume that the large trader needs to purchase the $X_0$ shares in $N+1$ steps at equidistant points in time $0=:t_0<\dots <t_N:=T$. His or her {\it admissible strategies} are sequences $\xi=(\xi_0,\dots ,\xi_N)$ of random variables such that
\begin{itemize}
\item $\sum_{n=0}^{N} \xi_n= X_0$,
\item $\xi_n$ is $\mathscr{F}_{t_n}$-measurable for all $n$, and
\item all $\xi_n$ are bounded from below.
\end{itemize}
We denote the set of all admissible strategies by $\hat{\Xi}$. The goal is to find an admissible strategy $\xi^*$ that minimizes the {\it average cost} $\mathscr{C}(\xi)$ given by the sum of the single trades' costs:
\begin{equation}\label{eq:avgCost}
\mathscr{C}(\xi):=\mathbb{E}\left(\sum_{n=0}^N \pi_{t_n}(\xi_n)\right).
\end{equation}
Under the technical assumption that
\begin{equation}\label{eq:infVol}
\lim_{x\to\infty} F(X)=\infty\ \textrm{ and }\ \lim_{x\to -\infty}F(x)=-\infty,
\end{equation}
Alfonsi, Fruth and Schied give the unique optimal strategies for both versions explicitly. We restate them here to give the reader the opportunity to compare them to our theorems for the GAFS model in Section \ref{sec:parameters}. For the sake of convenience, we set $\tau := T/(N+1)=t_{n+1}-t_n$.
\subsubsection*{Optimal strategy for Version 1, Theorem 4.1 in \cite{alfonsi09}}
Suppose that the function
\begin{equation}
h_1(x):=F^{-1}(x)-e^{-\rho\tau}F^{-1}(e^{-\rho\tau}x)
\end{equation}
is one-to-one. Then there exists a unique optimal strategy $\xi^{(1)}=(\xi^{(1)}_0,\dots,\xi^{(1)}_N)$. The initial market order $\xi^{(1)}_0$ is the unique solution of the equation
\begin{equation}\label{eq:toSolve2}
F^{-1}\left(X_0-N\xi^{(1)}_0(1-e^{-\rho\tau})\right)=\frac{h_1(\xi^{(1)}_0)}{1-e^{\rho\tau}},
\end{equation}
the intermediate orders are given by
\begin{equation}
\xi^{(1)}_1=\dots=\xi^{(1)}_{N-1}=\xi^{(1)}_0(1-e^{-\rho\tau}),
\end{equation}
and the final order is determined by
\begin{equation}
\xi^{(1)}_N=X_0-\sum_{n=0}^{N-1} \xi^{(1)}_n.
\end{equation}
In particular, the optimal strategy is deterministic. Moreover, it consists only of nontrivial buy orders, that is $\xi_n>0$ for all $n$.

\subsubsection*{Optimal strategy for Version 2, Theorem 5.1 in \cite{alfonsi09}}
Suppose that the function
\begin{equation}
h_2(x):=x\frac{f(x)-e^{-2\rho\tau}f(e^{-\rho\tau}x)}{f(x)-e^{-\rho\tau}f(e^{-\rho\tau}x)}
\end{equation}
is one-to-one and that the shape function satisfies
\begin{equation}
\lim_{|x|\to\infty}x^2\inf_{y\in [e^{-\rho\tau}x,x]}f(y)=\infty.
\end{equation}
Then there exists a unique optimal strategy $\xi^{(2)}=(\xi^{(2)}_0,\dots,\xi^{(2)}_N)$. The initial market order $\xi^{(2)}_0$ is the unique solution of the equation
\begin{equation}\label{eq:toSolve1}
F^{-1}\left(X_0-N[\xi^{(2)}_0-F(e^{-\rho\tau}F^{-1}(\xi^{(2)}_0))]\right)=h(F^{-1}(\xi^{(2)}_0)),
\end{equation}
the intermediate orders are given by
\begin{equation}
\xi^{(2)}_1=\dots=\xi^{(2)}_{N-1}=\xi^{(2)}_0-F(e^{-\rho\tau}F^{-1}(\xi^{(2)}_0)),
\end{equation}
and the final order is determined by
\begin{equation}
\xi^{(2)}_N=X_0-\sum_{n=0}^{N-1} \xi^{(2)}_n.
\end{equation}
In particular, the optimal strategy is deterministic. Moreover, it consists only of nontrivial buy orders, that is $\xi_n>0$ for all $n$.

\ 

One can easily check that the orders $\xi_1^{(\cdot)},\dots,\xi_{N-1}^{(\cdot)}$ have exactly the volume that has recovered since the last trade. In this sense, the theorems just give the right balance between the first and the last order. This balance is found by solving the particular equations, (\ref{eq:toSolve2}) and (\ref{eq:toSolve1}), given in both theorems.

\section{The Opinion Game}\label{sec:opinionGame}
Next, we focus on the Opinion Game. In Section \ref{sec:model}, we recapitulate the original model as introduced by Bovier, \v{C}ern\'y and Hryniv in \cite{bovier06}. We have already discussed in the introduction why the underlying generalized order book of this model provides even more information about the market behavior than a {\it classical} order book. Yet, the Opinion Game has no explicit notion of orders and, consequently, also large orders and their executions are not defined. However, we argue in Section \ref{sec:reinterpretation} that the generalized order book contains an implicit notion of orders. Furthermore, we state the algorithm that we use to simulate the execution of large orders and show on a qualitative level that this extension leads to a realistic response of the Opinion Game to large orders.

\subsection{The model}\label{sec:model}
We consider a fixed number of traders $N\in\mathbb{N}$ and a fixed number of tradable shares $M<N$. Every trader is described by the pair $(p_i, n_i)$, whereby $p_i$ is the opinion of trader $i$ about the {\it right} logarithmic price; the opinion is individual and subjective. For numerical reasons, $p_i\in\mathbb{Z}$. The number of shares that trader $i$ posseses is given by $n_i$. In the most general setting, $n_i$ can take values form $0$ to $M$; however, we just divide traders in {\it buyers} and {\it sellers} by setting $n_i\in\{0,1\}$. We define the {\it best bid price} by
\begin{equation}
p^b := \max_{i:n_i = 0} p_i,
\end{equation}
and the {\it best ask price} by
\begin{equation}
p^a := \min_{i:n_i = 1} p_i;
\end{equation}
the {\it price} $p$ is given by
\begin{equation}
p := \frac{p^b+p^a}{2}.
\end{equation}
The market is said to be in a {\it stable state} if $p^b < p^a$; no buyer is then willing to pay the lowest asked price and vice versa.

For numerical reasons, the dynamics is defined in discrete time. Every round consists of three steps:
\begin{enumerate}
\item {\bf A trader is chosen}\\
We define
\begin{equation}
g(i,t) := \left\{\begin{array}{ll}
(1+p^b(t)-p_i(t))^{-\gamma}&\textrm{ if trader $i$ is buyer}\\
(1+p_i(t)-p^a(t))^{-\gamma}&\textrm{ if trader $i$ is seller}
\end{array}\right. ,
\end{equation}
and set
\begin{equation}\label{eq:probChosen}
P(\textrm{trader $i$ is chosen at time t}) := \frac{g(i,t)}{\sum_{j=1}^N g(j,t)}.
\end{equation}
The parameter $\gamma > 0$ can be chosen arbitrarily. Observe that the defined measure prefers traders close to the price. The larger $\gamma$ is the greater is this preference. Here, we assume that a trader close to the current price reacts faster to price fluctuations than a long time investor with an opinion being completely different from the current price.
\item {\bf The trader's change of opinion}\\
If a trader is chosen, he or she changes her opinion to $p_i'(t+1):=p_i(t)+d(t)$. The random variable $d(t)$ takes values in $\{-l, -l+1,\dots,l-1, l\}$, $l\in\mathbb{N}$, and is independently sampled for all $t$. The measure of $d(t)$ is given by
\begin{equation}
P(d(t) = m) = \left\{\begin{array}{ll}
\frac{1}{2l+1}\left((\mu_\mathrm{ext}(t)\mu_\cdot)^m \wedge 1\right)&\textrm{for $m\neq 0$}\\
\\
1-\sum_{m=1}^l P(d(t)=\pm m)&\textrm{else}
\end{array}\right.\textrm{, whereby}
\end{equation}
\begin{equation}
\mu_\cdot := \left\{\begin{array}{ll}
\mu_B&\textrm{if trader $i$ is a buyer}\\
\mu_S&\textrm{if trader $i$ is a seller}
\end{array}\right. .
\end{equation}
We assume that $\mu_S < 1 < \mu_B$ to implement the idea that all traders have a tendency to move into the direction of the price. The $\mu_\mathrm{ext}(t)$ introduces a drift that changes randomly in time and acts on all traders in the same way, modeling news, rumors and events influencing the price. This drift process is of paramount importance for the stylized facts, statistical features of the price process on large time scales; however, as we want to concentrate on the large orders' impact, which happens on shorter time scales, we assume $\mu_\mathrm{ext} \equiv 1$ in the remainder of this article.
\item {\bf Trading (if necessary)}\\
If the market with the changed opinion is stable again, that is
\begin{equation}
p^b((p_1(t),\dots,p'_i(t),\dots,p_N(t)))<p^a((p_1(t),\dots,p'_i(t),\dots,p_N(t))),
\end{equation}
we set $p_i(t+1) := p_i'(t+1)$, else a trade happens. Let us assume that trader $i$ is a buyer, the other case is symmetric. We uniformly choose a trading partner $j$ with $p_j(t) = p^a(t)$ and set $n_i(t+1)=0$ and $n_j(t+1)=1$. After the trade, both traders move away from the best price:
\begin{equation}
p_i(t+1) := p^a(t+1)+\bar{g}\ \ \textrm{ and }\ \ p_j(t+1) := p^a(t+1)-g
\end{equation}
whereby $\bar{g}$ and $g$ can be fixed or random numbers in $\mathbb{N}$. This last step is justified by the idea that the traders want to make profit and are only willing to trade for a better price than they have paid.
\end{enumerate}

\subsection{An extension for large orders}\label{sec:reinterpretation}
For the existence of orders in the Opinion Game, let us consider a buyer and a seller with matching opinions such that a trade happens. In order book driven markets,  trades can only come about if both traders have placed some kind of orders. From this point of view, the Opinion Game has an {\it implicit} notion of orders, at least when trades are happening. This observation motivates a change of our point of view on the Opinion Game: In the remainder of this article, we rather think about (maybe hidden or unplaced) buy or sell orders instead of traders with opinion. For the sake of convenience, we omit the word {\it generalized} in the following when we talk about the order book of the Opinion Game.

To test the AFS model, we have to introduce large orders to the Opinion Game. Assume we would like to purchase $X$ stocks at time $t$. Then, we do not apply the standard dynamics explained above at time $t$; instead, we use the following algorithm:
\ \\
\begin{itemize}
\item[] set $p_k^{(1)} := p_k(t)$ for all $k \in \{1,\dots,N\}$
\item[] set $n_k^{(1)} := n_k(t)$ for all $k \in \{1,\dots,N\}$
\item[] let $p^a(1)$ be the best ask price of the configuration $\left(p_k^{(1)},n_k^{(1)}\right)_{k \in \{1,\dots,N\}}$
\item[]
\item[] from $x := 1$ to $X$ do \{
\begin{itemize}
\item[] find $i$ s.th. $p^{(x)}_i \leq p^{(x)}_j$ for all $j \in \{1,\dots,N\}$
\item[] $p^{(x+1)}_i := p^a(x)$
\item[] choose uniformly trading partner $j$ s.th. $i\neq j$ and $p^{(x)}_j = p^a(x)$
\item[] $n^{(x+1)}_i := 1$ and $n^{(x+1)}_j := 0$
\item[] $p^{(x+1)}_j := p^a(x) - g$
\item[] $p^{(x+1)}_i := p^a(x) + \hat{g}(x)$
\item[]
\item[] set $p_k^{(x+1)} := p_k^{(x+1)}$ for all $k \in \{1,\dots,N\}\backslash\{i,j\}$
\item[] set $n_k^{(x+1)} := n_k^{(x+1)}$ for all $k \in \{1,\dots,N\}\backslash\{i,j\}$
\item[] let $p^a(x+1)$ be the best ask price of the configuration $\left(p_i^{(x+1)},n_i^{(x+1)}\right)_{i \in \{1,\dots,N\}}$
\end{itemize}
\item[] \}
\item[] set $p_k(t+1) := p_k^{(X+1)}$ for all $k \in \{1,\dots,N\}\backslash\{i,j\}$
\item[] set $n_k(t+1) := n_k^{(X+1)}$ for all $k \in \{1,\dots,N\}\backslash\{i,j\}$
\end{itemize}

The value $g$ is the same random or deterministic value as in the original dynamics. The random variables $\hat{g}(x)$ are independently distributed with measure
\begin{equation}
P(\hat{g}(x) = k) = \frac{1}{M}\sum_{n=1}^N \mathbbm{1}_{\{p_n^{(x)}-p^a(x)=k\}}\textrm{ for }k\in\mathbb{N}_0.
\end{equation}
In other words, we execute a large buy order of volume $X$ by taking the lowest $X$ orders one by one and putting them directly to the ask price such that a trade is enforced. The number of market participants is constant in the Opinion Game, thus taking orders from the tail is an obvious method to simulate a large order that is placed {\it out of the blue}. After each single trade, we adjust the order prices; the price of the (new) buy order is decreased by $g$, the price of the sell order is increased by $\hat{g}$. The density function of $\hat{g}$ is given by the order book's current shape.

\begin{figure}[ht]
\includegraphics[width=.45\textwidth]{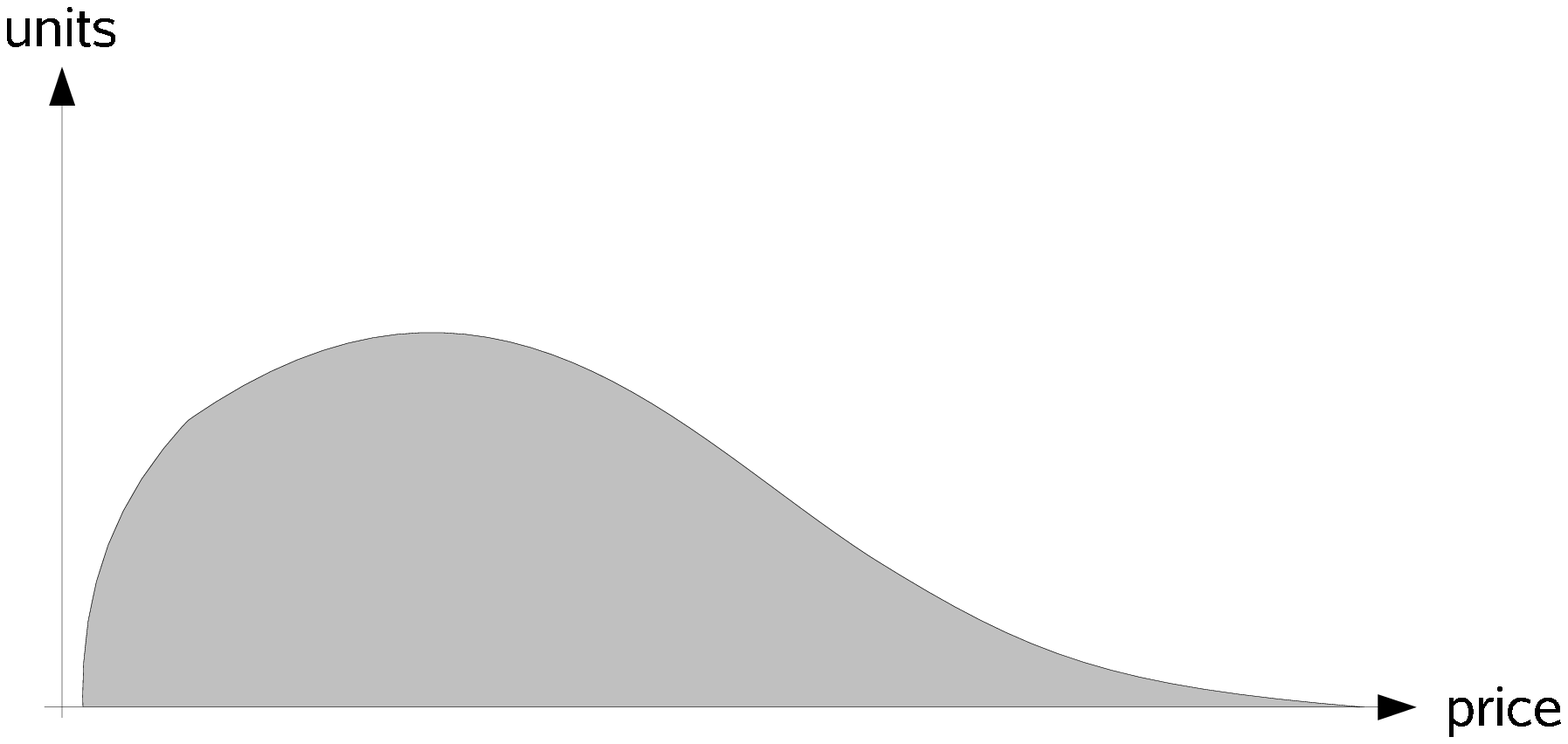}
\includegraphics[width=.45\textwidth]{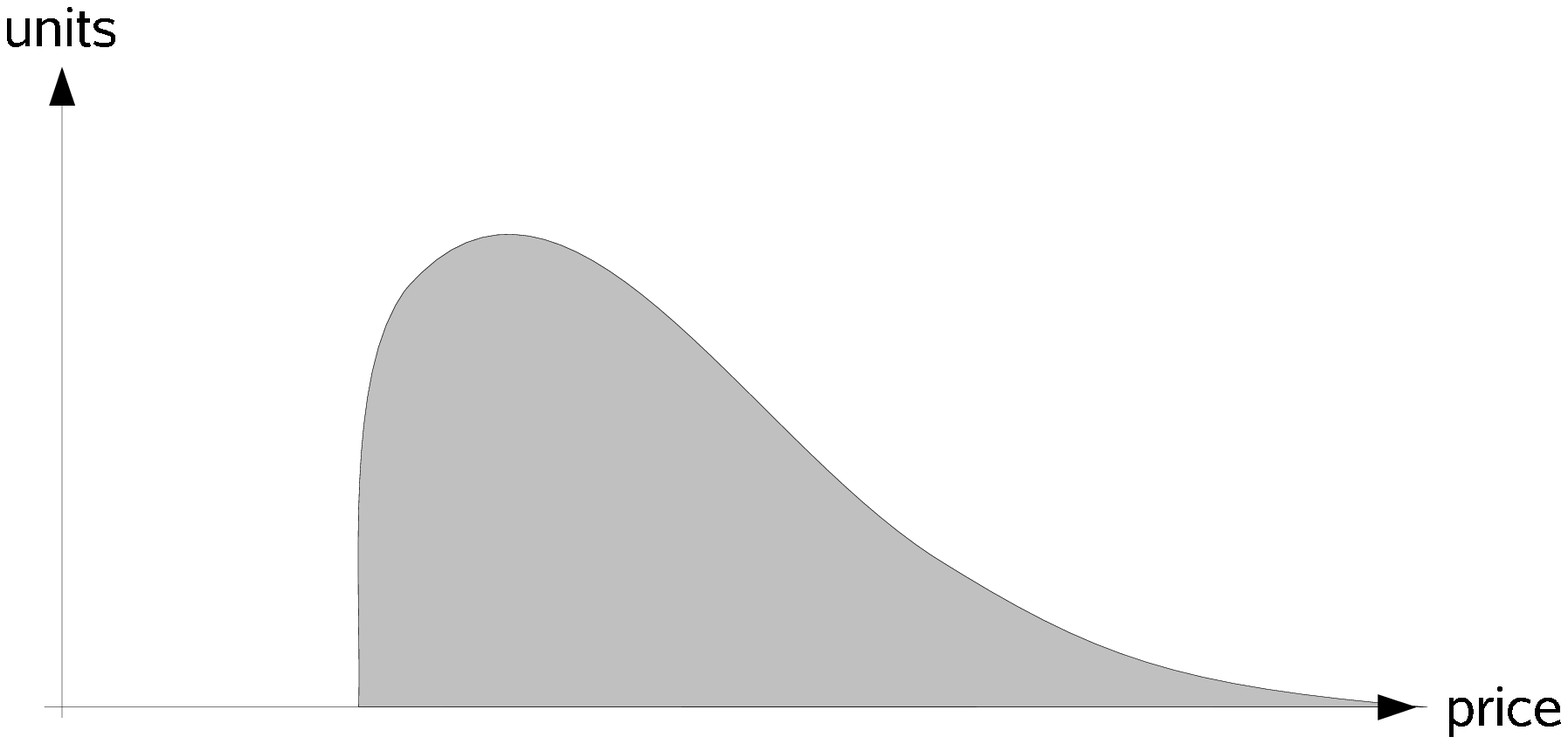}\\
\includegraphics[width=.45\textwidth]{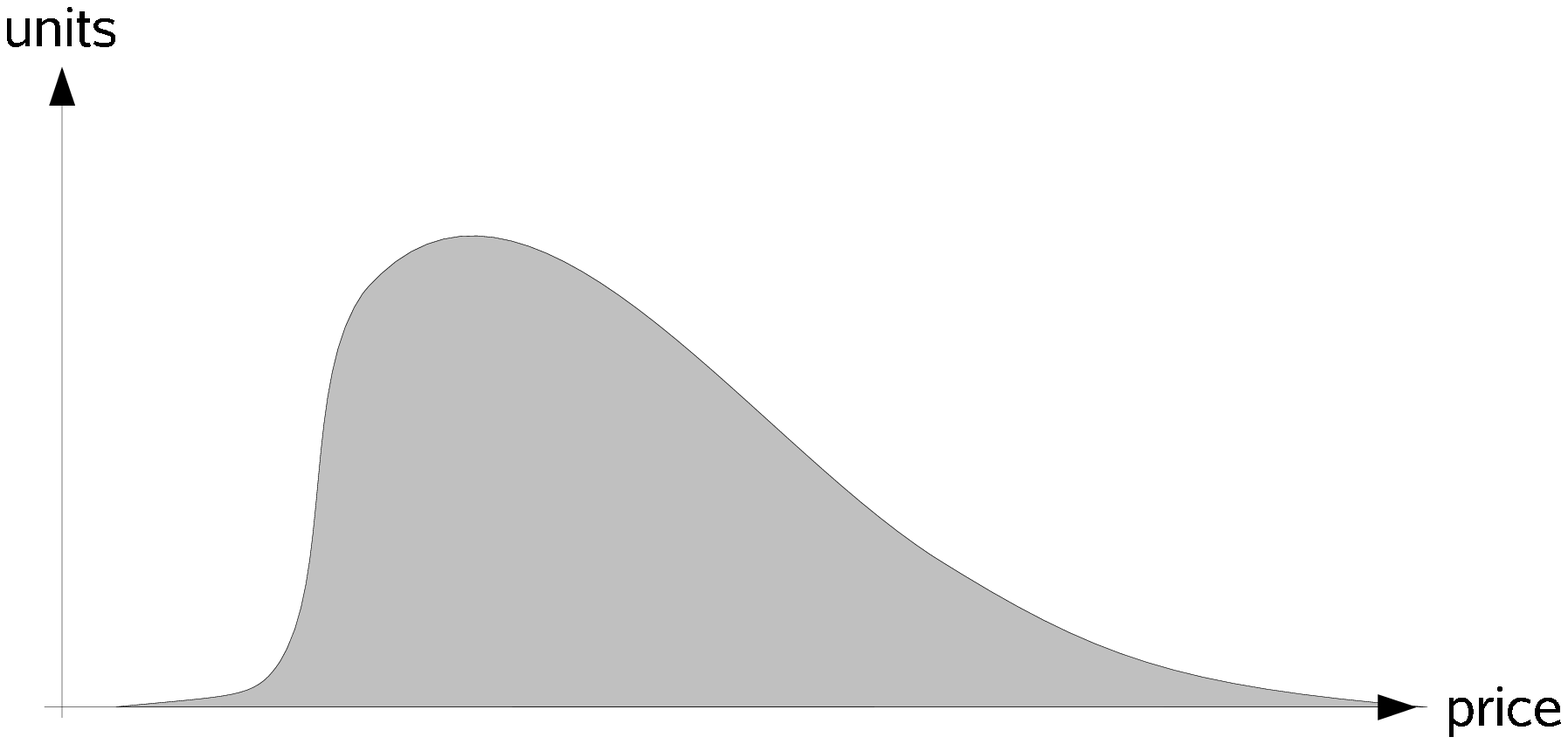}
\caption{Sketch of the order book shape in the Opinion Game when a large order is executed. Before the execution, the order book is in equilibrium (upper left figure); directly afterwards, the best ask price is increased, and there is more liquidity close to it (upper right figure). When the LOB recovers from the order, the best ask price decreases, but the best quotes have a low volume only (lower figure); it takes more time until the order book is in equilibrium again.}
\label{fig:orderBookE}
\end{figure}

This choice of $\hat{g}$ leads to a realistic response of the order book to the execution of large orders (see Figure \ref{fig:orderBookE}). While the large order is executed, the new sell orders have a great probability to be placed in vicinity to the peak of the order book's seller part; thus the peak grows, and the order book provides more liquidity for prices in this region. Here, we implement the idea that the execution of a large buy order leads to a conspicuous rise in the price that attracts more traders to place sell orders close to the current best ask price; these traders hope that the price increase continues such that their orders are executed. At the same time, these additional offers provide more liquidity that slows down the price increase. If we consider the immediate price impact of the large order as function of the executed volume, the additional liquidity leads to a sublinear function shape. Sublinear behavior of an order's immediate price impact has also been observed for real world markets in several empirical studies \cite{bouchaud04,almgren05}. After the execution of the large order, the price increase stops and some traders realize quickly that orders for higher prices will probably not be executed in the near future; they place new orders for lower prices. However, most traders need more time to acknowledge that their price claims are probably too high. In result, the best ask price decreases, but the order book volume in proximity to the new best quote is low. It takes more time until the LOB is back in equilibrium. This recovery behavior of the order book is technically implemented by the preference for traders close to the best quotes in (\ref{eq:probChosen}) when we update opinions. As another feature that is known from real world markets, the best ask price does not return to the value it has had before the execution, but it stabilizes at higher values after the order book has returned to equilibrium. We discuss this {\it permanent impact} on the best price in Section \ref{sec:parameters2}.

Since the dynamics are symmetric, the algorithm applies to large sell orders in the same way.

\section{Determining the parameters}\label{sec:parameters}
The Opinion Game provides a variety of parameters to influence the characteristics of the modeled market. For instance, it is possible to change the size of the market or the volatility in the Opinion Game to simulate {\it different} markets. Nevertheless, we restrict ourselves in the following to one parameter set, which is stated in Subsection \ref{sec:orderBook}. Although the variation of parameters surely leads to additional insight, our choice already gives a sound understanding of the problems that occur when applying the AFS model. In the same subsection, we also describe the {\it averaged} shape of the Opinion Game's order book that will serve as shape function $f$ for the AFS model.

Having set up the Opinion Game, we try to determine the AFS model's resilience speed, $\rho$. It turns out that the assumption of a constant $\rho$ is not valid in the Opinion Game. Thus, we substitute $\rho$ by a function $\bar{\rho}$ that maps both the order's impact and the time elapsed since the last trade to the resilience speed. We describe how we can extract the function values from the sampled data, and argue that it is sufficient to know the impact-dependent  function $\bar{\rho}(\cdot):=\bar{\rho}(\cdot,\tau)$ only; recall that $\tau=T/N$ was the recovery time between two successive trades. Finally, we introduce the generalized AFS theorems that assume the resilience speed to be a function of the price impact (in Version $2$) or the volume impact (in Version $1$).

\subsection{The parameters of the Opinion Game and the shape of its order book}\label{sec:orderBook}
There is a high degree of freedom in the parameters for the Opinion Game. Nevertheless, certain parameter sets have been shown to be more reasonable choices than others. Calibrated with these parameter values, the Opinion Game results in a realistic price process in terms of stylized facts. However, not all choices can be justified rigorously. For an extensive discussion about the choice of parameter $\gamma$, for instance, we refer to \cite{weiss09}. We used the following values in all simulations throughout this article, since those ones have been shown to generate price processes with realistic statistical features \cite{bovier06, weiss09b}:
\begin{center}
\begin{tabular}{|l|p{8.5cm}|}
\hline
$\sharp$ of traders $N$&$2000$\\
\hline
$\sharp$ of shares $M$&$1000$\\
\hline
speed of adaption $\gamma$&$1.5$\\
\hline
jump range $\{-l,\dots,l\}$&$\{-4,\dots,4\}$\\
\hline
drift of buyers $\mu_B$&$e^{0.1}$\\
\hline
drift of sellers $\mu_S$&$e^{-0.1}$\\
\hline
jump ranges $\bar{g}$, $g$&random variables, uniformly distributed on $\{5,\dots,20\}$, sampled idependently every time they are used\\
\hline
\end{tabular}
\end{center}
\ \\
All sample runs that we did in the Opinion Game, either to extract necessary parameters or to test execution strategies, were started independently with a random seed for the random number generator. Furthermore, the recording of data or the execution of large orders was started after $1\,000\,000$ simulation steps only, such that the model had sufficient time to get close to a stable state.

\begin{figure}
\includegraphics[angle=270, width=\textwidth]{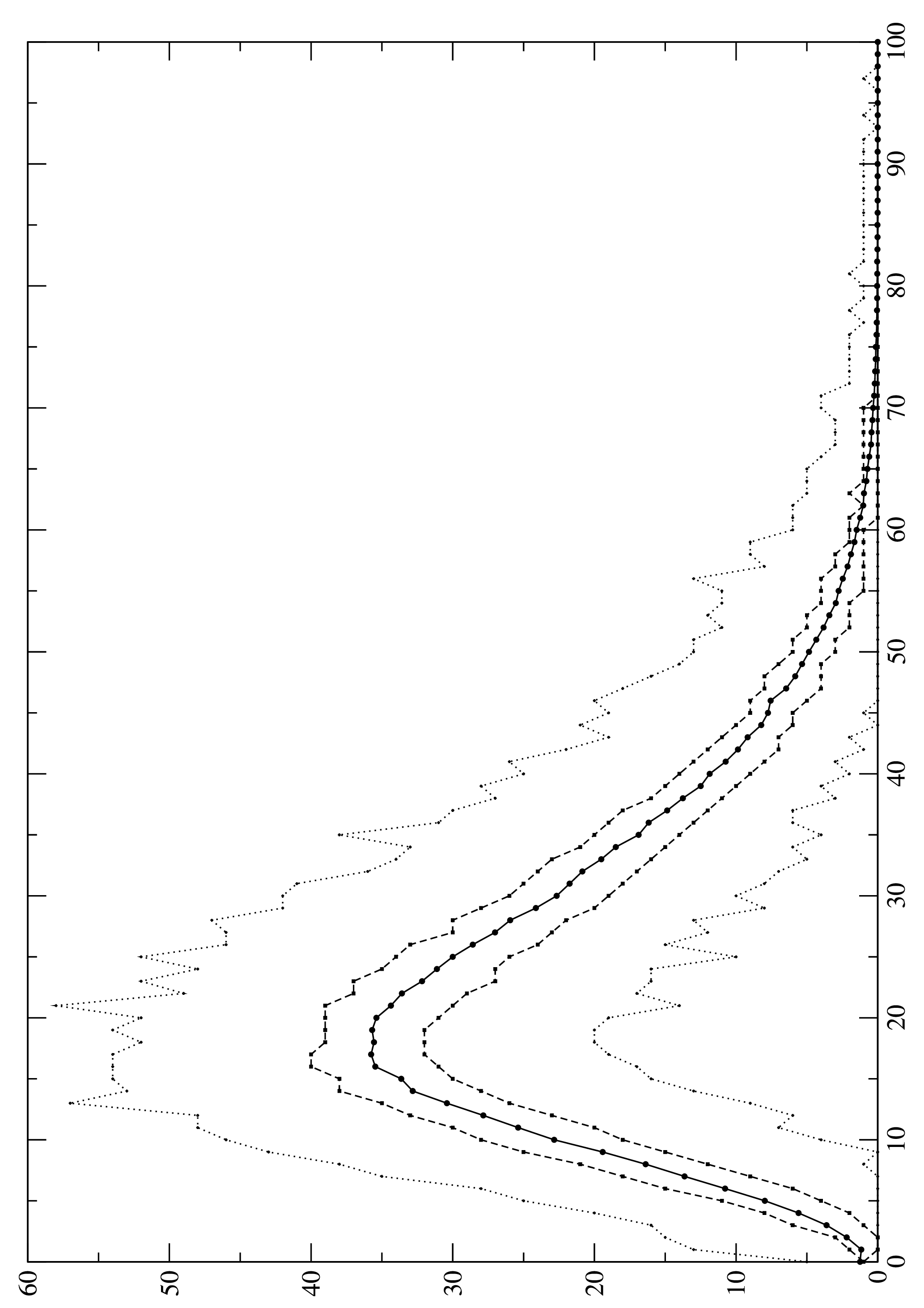}
\caption{The seller part of the LOB relative to the best ask price. The solid line marks the mean values, the dashed lines illustrate the quartiles. The minimal and maximal values are illustrated by the dotted lines.}
\label{fig:orderBook}
\end{figure}
To determine $f$, we recorded $500$ times the Opinion Game's LOB relative to the best prices. Figure \ref{fig:orderBook} shows the resulting upper part of the order book. The lower part is symmetric up to small deviations caused by the object's random nature. Even if the shape is not static as assumed in the AFS model, an {\it averaged shape} is clearly visible. We use these mean values to define the shape function $f$ for the Opinion Game. For non-integer values, we interpolate $f$ by assuming that the function is a right-continuous step function. This means that we violate the assumption of the AFS model about $f$ being continuous. Yet, this choice for $f$ has the advantage that the integral of $f$ from $0$ to an integer $n$ is equal to the sum of the integer function values from $0$ to $n-1$. Furthermore, for all parameter sets that we considered, we were still able to find unique solutions for the optimal trading strategies.

Recall that the price scale in the Opinion Game is logarithmic, whereas the AFS model assumes a linear scale. However, it is possible to scale the grid of the Opinion Game with a factor $\epsilon$, and the difference between logarithmic and linear scale is negligible if $\epsilon$ is small. To determine the order of $\epsilon$, we consider an order of $200$ units of shares, $20\%$ of the market volume in the Opinion Game; it is mentioned in \cite{alfonsi09} that the size of large orders can amount up to twenty percent of the daily traded volume. We assume that the shape of the LOB, $f$, is determined as described above, and the best ask price before our trade is denoted by $A^0$. Then the relative impact costs are given by
\begin{equation}\label{eq:approx}
\frac{1}{200 e^{\epsilon A^0}}\int_0^{D_{0+}}e^{\epsilon(A^0+x)}f(x)dx-1\approx \frac{\epsilon}{200}\underbrace{\int_0^{D_{0+}}x f(x)dx}_{\approx 2039.47}\approx10.20\epsilon.
\end{equation}
An empirical study of the US stock market shows that large orders can cause relative costs up to $3.55\%$ \cite{almgren05}. If we assume that $\epsilon \leq 0.0355/10.2$, then $\epsilon$ is of order $10^{-3}$ at most. Thus, it is reasonable to assume $\epsilon$ to be small. However, we are interested in qualitative results; thus, and for the sake of convenince, we will simply assume that the Opinion Game operates on $\mathbb{Z}$.

\subsection{Determining \texorpdfstring{$\rho$}{rho} for the AFS model}\label{sec:parameters2}
In the following, we present our approach to calibrate $\rho$ for the Opinion Game. We describe our simulation approach and the corresponding results for Version $2$ of the AFS model only. Recall that, in this version, $\rho$ determines the recovery speed of the price impact. Our ansatz and the observations are qualitatively the same for the other version. Nevertheless, we introduce the GAFS model for the price impact dependent as well as for the volume impact dependent case in the end of this section.We first describe how we sampled the necessary data. Afterwards, we focus on the main problems of extracting $\rho$ from those data. Possible solutions are discussed and culminate in this section's main result: The GAFS theorems, which assume that the resilience speed is a function $\bar{\rho}$ depending on the order's impact.

We fixed a price impact $D\in\{1,\dots,20\}$ and ran $2500$ simulations for each value of $D$, resulting in $50\,000$ simulations. Since every run had an initialization period of $1\,000\,000$ steps, each simulation took several seconds. Observe that a simulation time of one second per run already results in a total computing time of almost $14$ hours. As we ran several simulations parallel, we were able to finish the data collecting within a few days.

Each run consisted of a trading part in which a large sell order was executed at once. The particular order volume was determined by its price impact: The trading part was finished as soon as the impact was equal to $D$. In a second experiment's part, we recorded the relaxiation of the price. In particular, the large execution took place at time $\bar{t}:=1\,000\,000$; we recorded
\begin{equation}
\bar{p}(t):=p^a(t+1+\bar{t})-p^a(\bar{t})
\end{equation}
for $t\in\{0,\dots,50\,000\}$. The process $(\bar{p}(t))_{t\in\mathbb{N}_0}$ is the discrete counterpart of the AFS model's process $D^A$.

To avoid problems caused by random fluctuations in $\bar{p}$, we consider the pointwise average of the samples denoted by $\langle\bar{p}\rangle$ and defined by
\begin{equation}
\langle\bar{p}\rangle_t := \frac{1}{2500}\sum_{i=0}^{2500} \bar{p}^i_t
\end{equation}
for all $t\in\{0,\dots,50\,000\}$, $\bar{p}^i$ denoting the $i$th sample. For a clear distinction, we denote the value for $\rho$ that we extract from $\langle\bar{p}\rangle$ by $\bar{\rho}_\mathrm{num}$. The AFS model assumes $\langle\bar{p}\rangle$ to be of the form
\begin{equation}\label{eq:naivRho}
\langle\bar{p}\rangle_t = De^{-\bar{\rho}_\mathrm{num}t}
\end{equation}
with a static value $\bar{\rho}_\mathrm{num}$; this follows from equation (\ref{eq:eq2mod2}). Thus we should be able to determine $\bar{\rho}_\mathrm{num}$ by
\begin{equation}\label{eq:detRho}
\bar{\rho}_\mathrm{num} = \frac{\ln D - \ln \langle\bar{p}\rangle_t}{t}
\end{equation}
for an arbitrary $t$. However, the right hand side of the equation depends on $D$ and $t$; thus, we would like to consider $\bar{\rho}_\mathrm{num}(D,t)$ as a function.

\begin{figure}
\includegraphics[angle=270, width=.6\textwidth]{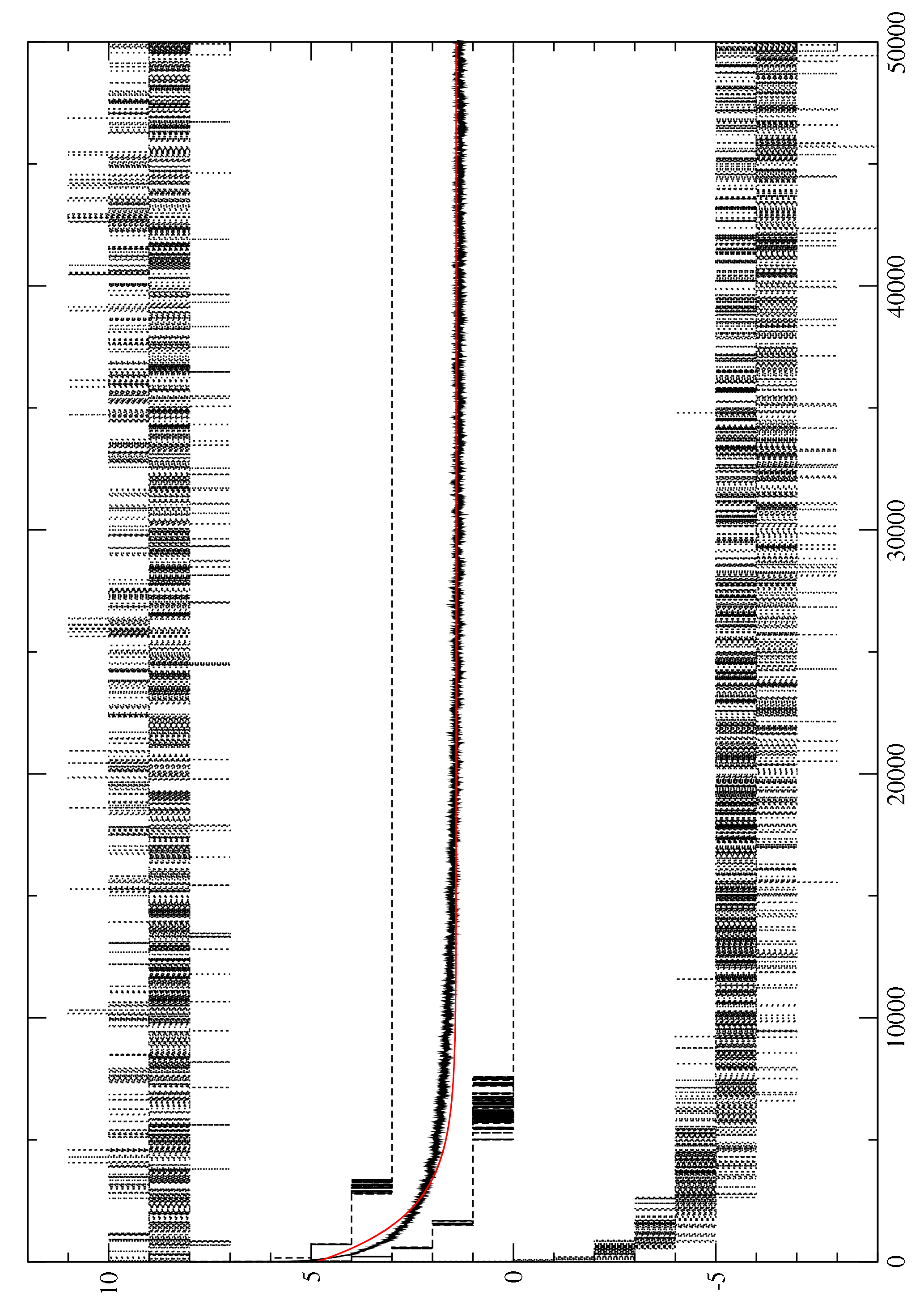}\\
\includegraphics[angle=270, width=.6\textwidth]{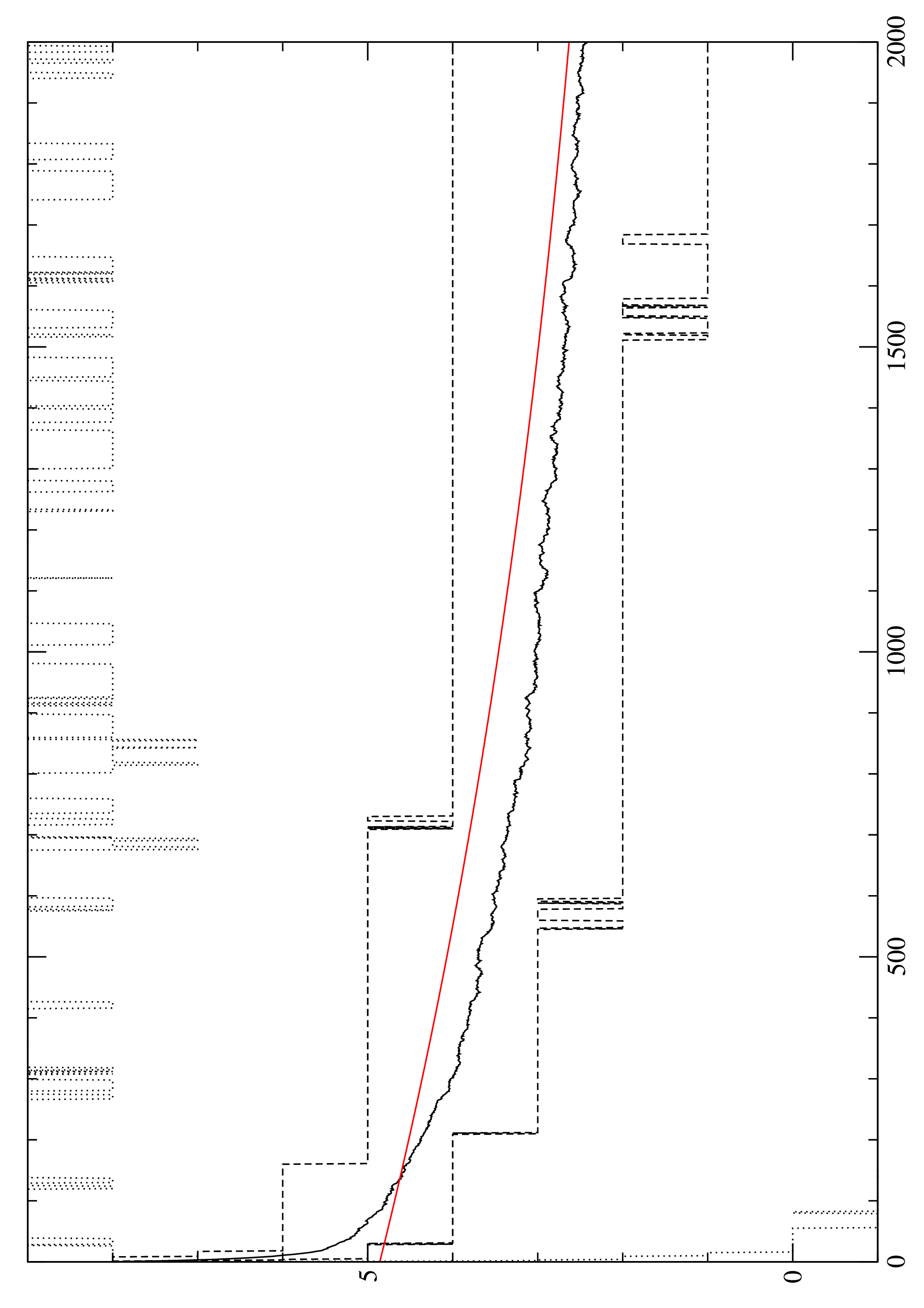}\\
\includegraphics[angle=270, width=.6\textwidth]{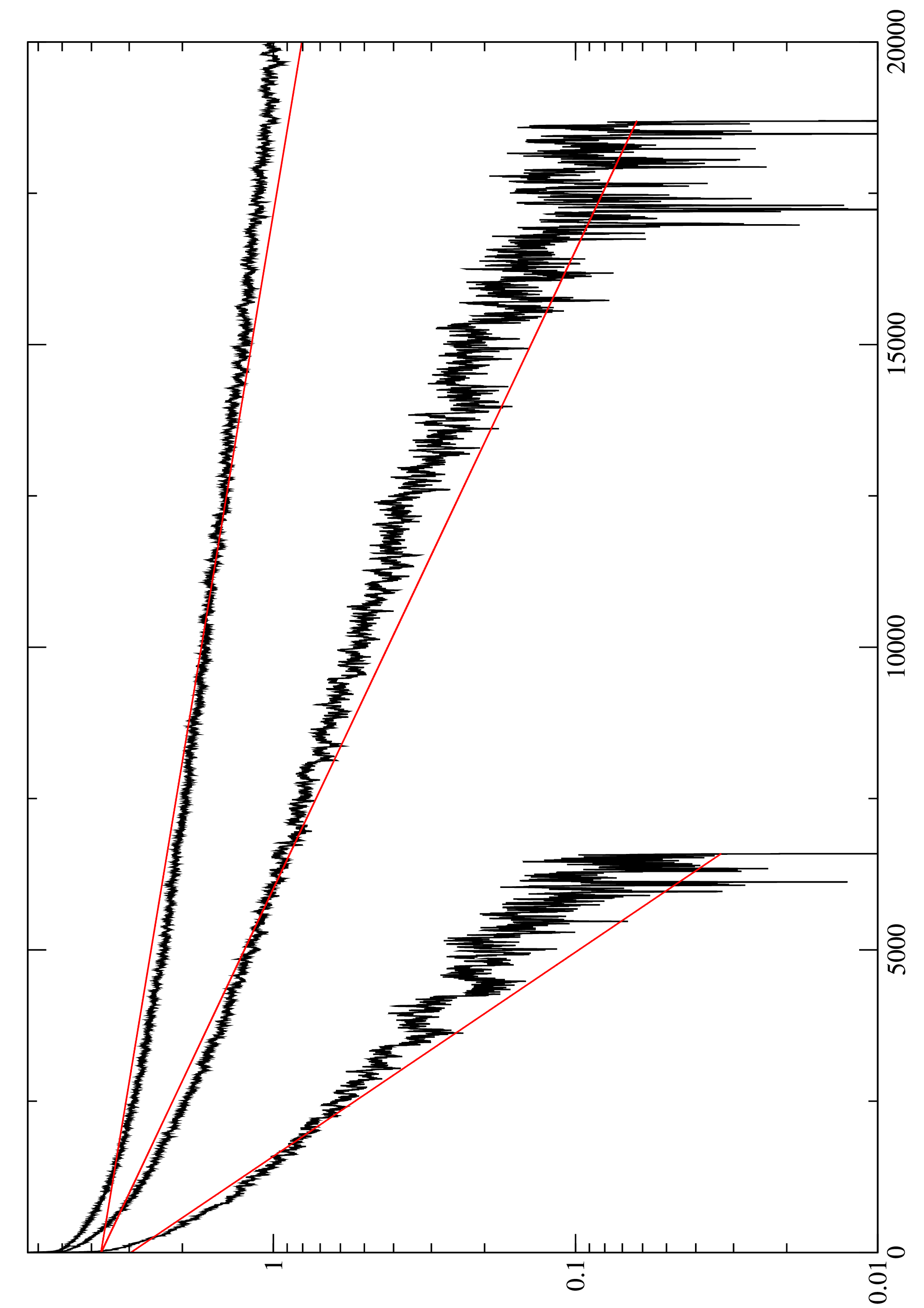}
\caption{The two upper graphs show quartiles and extremal values of $2500$ samples of $\bar{p}$ for $D=8$, and the corresponding $\langle\bar{p}\rangle$ and $\hat{p}$ (red). The graph on top illustrates the long time behavior on the domain $t\in [0,50\,000]$. Clearly, $\hat{p}$ converges to a level $A_\mathrm{D}>0$. The middle graph displays $t \in [0,2000]$ showing the poor approximation by $\hat{p}$. The lower graph shows $\langle\bar{p}\rangle$ (black) for $D=16$, $D=12$ and $D=8$ (top down) as well as their regression functions $\hat{p}$ (red) on a logarithmic scale and with respect to the new asymptotic levels $A_\mathrm{D}$. If $\bar{\rho}_\mathrm{num}$ was constant the $\langle\bar{p}\rangle$ should be approximately parallel.}
\label{fig:rho1}
\end{figure}

Given $\langle\bar{p}\rangle$, let $\hat{p}:[0,\infty)\to\mathbb{R}$ the corresponding regression function of the form 
\begin{equation}\label{eq:canRho}
\hat{p}_t := A+Be^{\hat{\rho} t},
\end{equation}
for $t\in [0,\infty)$. It is determined by a Newton-Gau\ss{} algorithm with three degrees of freedom: $A$, $B$, $\hat{\rho}$. Observe that all three values can depend on $D$. The form of the regression function is motivated by assumption (\ref{eq:naivRho}), which also leads to the expectation that $A=0$ and $B=D$. Figure \ref{fig:rho1} shows the statistical behavior of $\bar{p}$ for $D=8$, the corresponding $\langle\bar{p}\rangle$ and $\hat{p}$. Furthermore, we compare $\langle\bar{p}\rangle$ for different $D$ values. The three main problems are visible:
\begin{enumerate}
\item The AFS model assumes $A_\mathrm{D}$ to be $0$; this is not the case.
\item The measured data is only well-approximated by an exponential function for large times. For small $t$, it is doubtful that the assumption of an exponential decay is the right choice at all.
\item If $\bar{\rho}_\mathrm{num}$ was constant the $\langle\bar{p}\rangle$ should be approximately parallel on a logarithmic scale; instead, $\bar{\rho}_\mathrm{num}$ depends $D$.
\end{enumerate}
These problems occured for all tested values of $D$. Next, we discuss the problems and their consequences for the determination of $\bar{\rho}_\mathrm{num}$ one by one.

\subsubsection{Existence of a permanent price impact}
\begin{figure}
\includegraphics[angle=270, width=.85\textwidth]{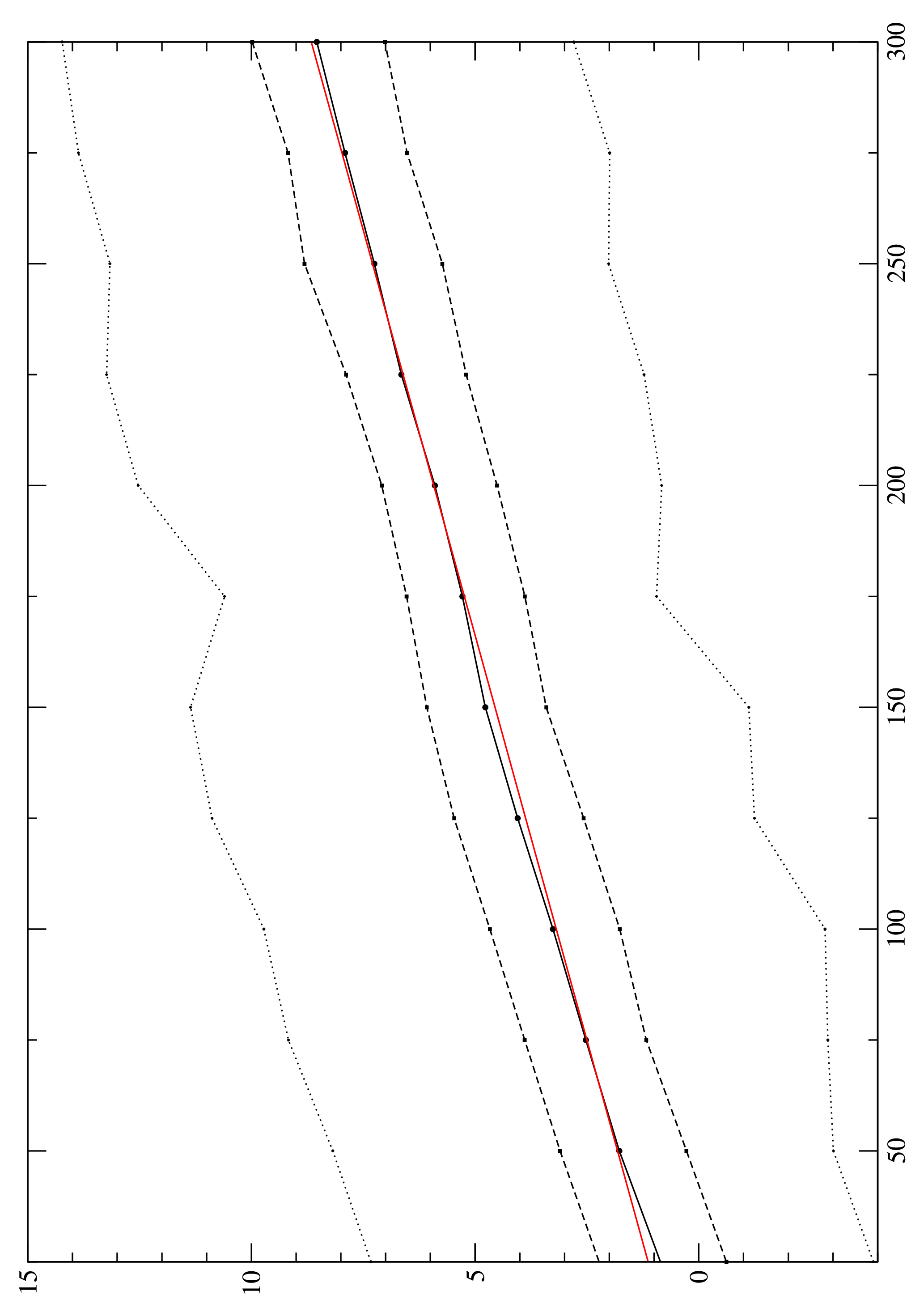}
\caption{Mean, quartiles and extremal values of $500$ samples of the permanent impact in dependence on the purchased volume $V\in\{25,50,\dots,275,300\}$. For every volume $V$, we recorded the best ask price before the trade and the averaged best ask price $500\,000$ steps after the trade. Here, the averaged best ask price is the mean of the best ask price sampled all $100$ steps over a time interval of $100\,000$ steps. The linear regression of the mean is displayed in red.}
\label{fig:perImp}
\end{figure}

The reason for problem (1) is a permament impact on the order book that a large trade causes. After having recovered, the LOB is shifted by $I_\mathrm{per}(X)$, whereby $I_\mathrm{per}:\mathbb{R}\to\mathbb{R}$ is assumed to be increasing and $I_\mathrm{per}(0)=0$. Huberman and Stanzl \cite{huberman04} argued on a theoretic level that linearity of $I_\mathrm{per}$ is equivalent to the absence of arbitrage opportunities. Empirical studies by Almgren et al. \cite{almgren05} reinforce the conjecture of a linear permanent impact: The authors state that the permanent impact is well described by the power law $x^{0.9\pm 0.1}$ with respect to a Gaussian error model; the assumption of linearity cannot be rejected by this result. Figure \ref{fig:perImp} shows the permanent impact for the Opinion Game. The mean is well aproximated by a linear function with coefficient $0.02738$.

Concerning the problems in determining $\bar{\rho}_\mathrm{num}$, caused by the positive $A_\mathrm{D}$, we have two possibilities: First, we could ignore the permanent impact such that $\bar{\rho}_\mathrm{num}$ would be given by (\ref{eq:detRho}). This would be an appropriate solution for small $t$, but it would cause the AFS model to assume that even for large $t$ the LOB is still not close to equilibrium; $\bar{\rho}_\mathrm{num}$ could become arbitrarily small. Second, we could assume that the whole model has been shifted by $A_\mathrm{D}$ such that $A_\mathrm{D}$ is the new zero line. In this case, $\bar{\rho}_\mathrm{num}$ would be given by
\begin{equation}
\bar{\rho}_\mathrm{num}(D,t) = \frac{\ln D-\ln(\langle\bar{p}\rangle_t-A_\mathrm{D})}{t},
\end{equation}
which is fine for large $t$ but grows to infinity as $t$ goes to zero. To avoid this problem, we define
\begin{equation}\label{eq:rho}
\bar{\rho}_\mathrm{num}(D,t) := \frac{\ln D-\ln(\langle\bar{p}\rangle_t-(1-e^{-t})A_\mathrm{D})}{t}.
\end{equation}
Furthermore, let us point out that there is no special reason to choose $1-\exp(-t)$. However, at this point, it becomes clear that the complex dynamics within the LOB are poorly described by an added permanent impact function.

\subsubsection{$\langle\bar{p}\rangle$ is poorly approximated by an exponential function}
Since $\langle\bar{p}\rangle$ should decay exponentially fast, $\bar{\rho}_\mathrm{num}$ should be a constant. However, the existence of a permanent impact and the consequential definition of $\bar{\rho}_\mathrm{num}$ in (\ref{eq:rho}) makes the validity of this assumption unlikely here. Yet, even without the permanent impact, the description of $\langle\bar{p}\rangle$ by an exponential function is poor as the upper right graph of Figure \ref{fig:rho1} shows. As mentioned in Section \ref{sec:AFSmodel}, the rejection of an exponential decay does not contradict former research results. If we nevertheless try to calibrate $\rho$, it becomes time-dependent. A time-dependent resilience speed seems to be incompatible with the AFS theorems at first, but a closer look at the theorem's statement reveals that $\rho$ is only needed to determine the order book state {\it before} the next trade, given the state {\it after} the current trade. The time between two succeeding trades is given by $\tau$. Thus, we focus on $\bar{\rho}_\mathrm{num}(\cdot,\tau)$ and use the notation
\begin{equation}
\bar{\rho}_\mathrm{num}(D):=\bar{\rho}_\mathrm{num}(D,\tau)
\end{equation}
assuming that $\tau$, which is given by the input parameters $N$ and $T$, is fixed. Figure \ref{fig:rho2} shows the function $\bar{\rho}_\mathrm{num}(\cdot,\tau)$ for several values of $\tau$.
\begin{figure}
\includegraphics[angle=270, width=.65\textwidth]{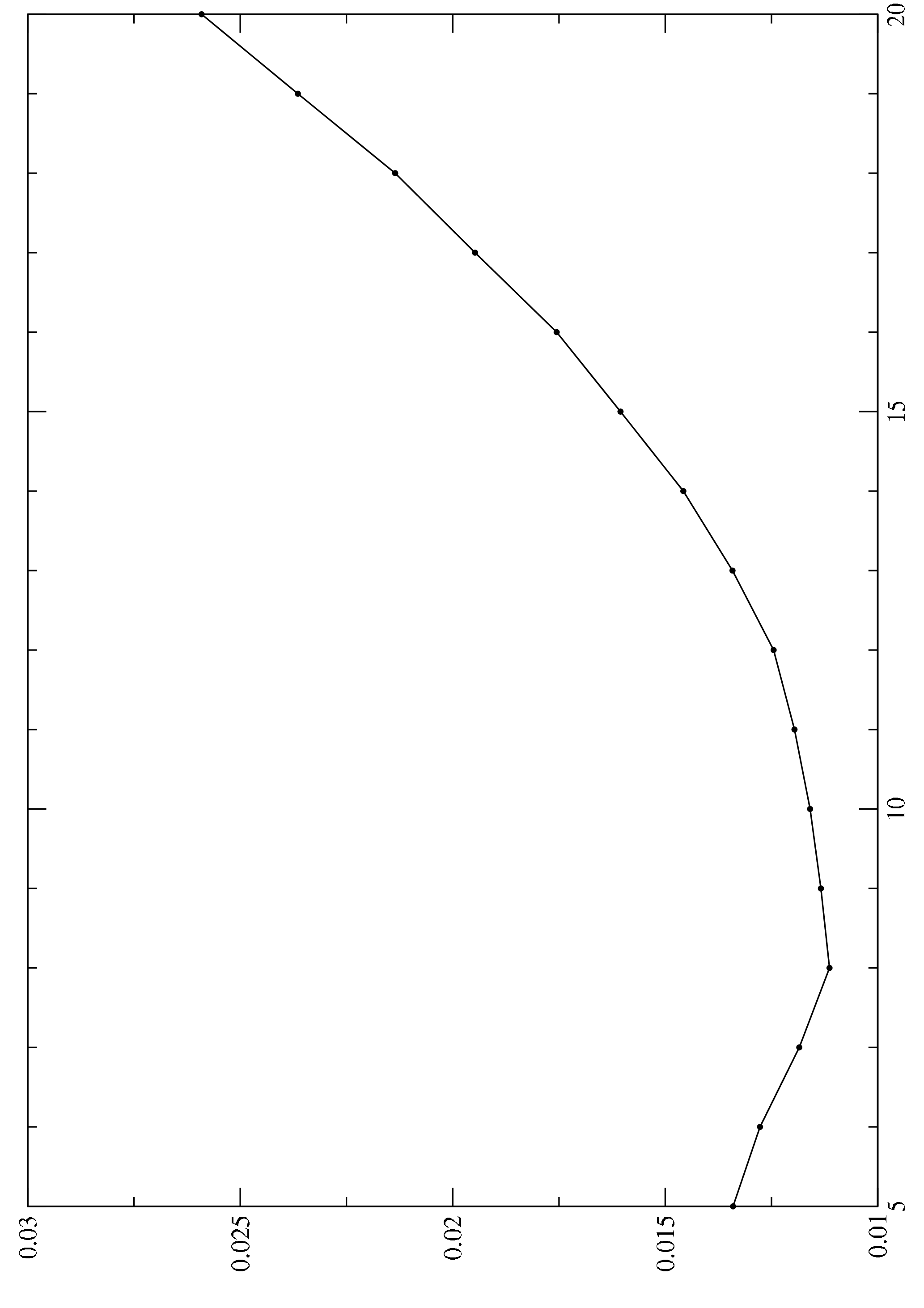}\\
\includegraphics[angle=270, width=.65\textwidth]{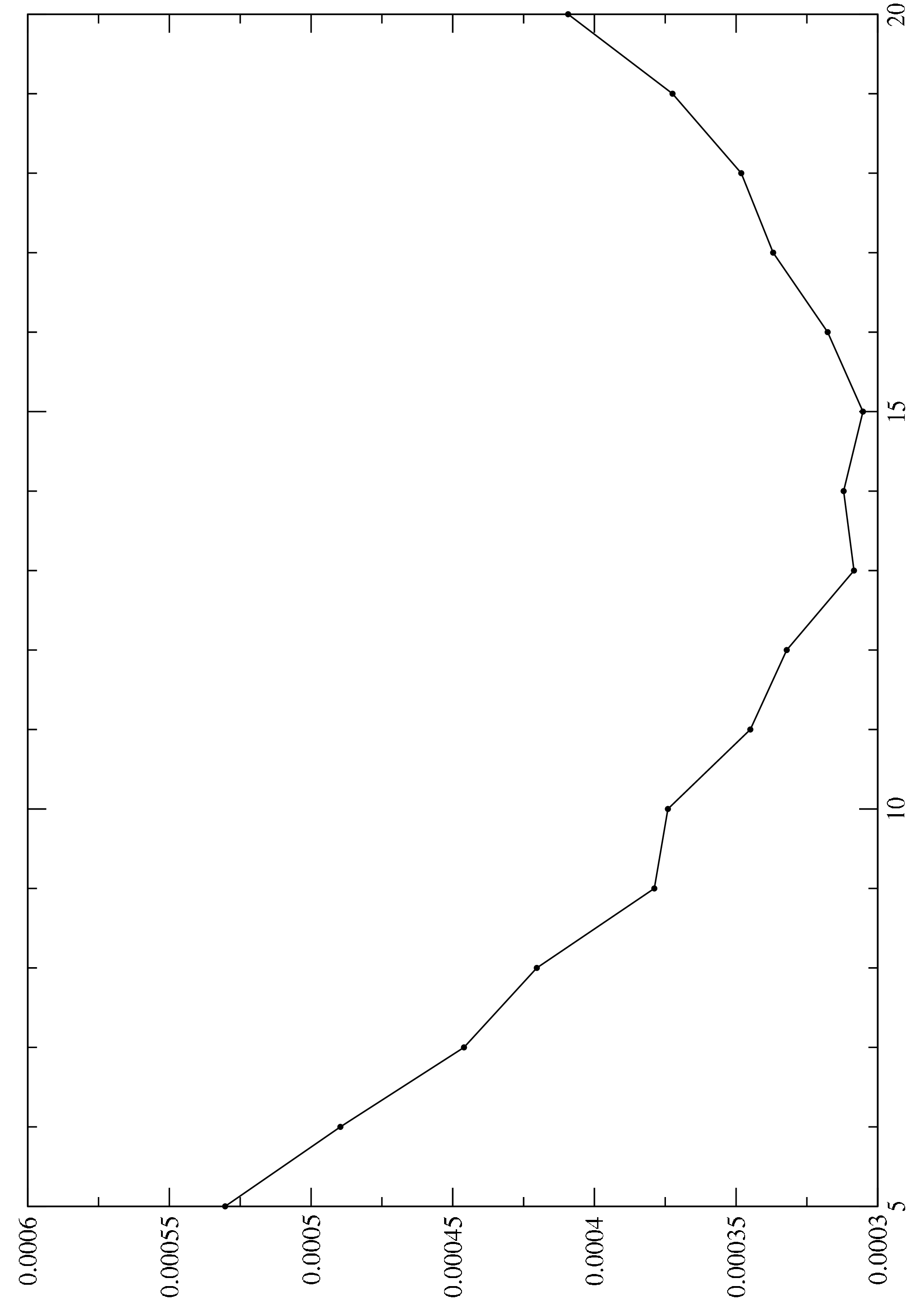}\\
\includegraphics[angle=270, width=.65\textwidth]{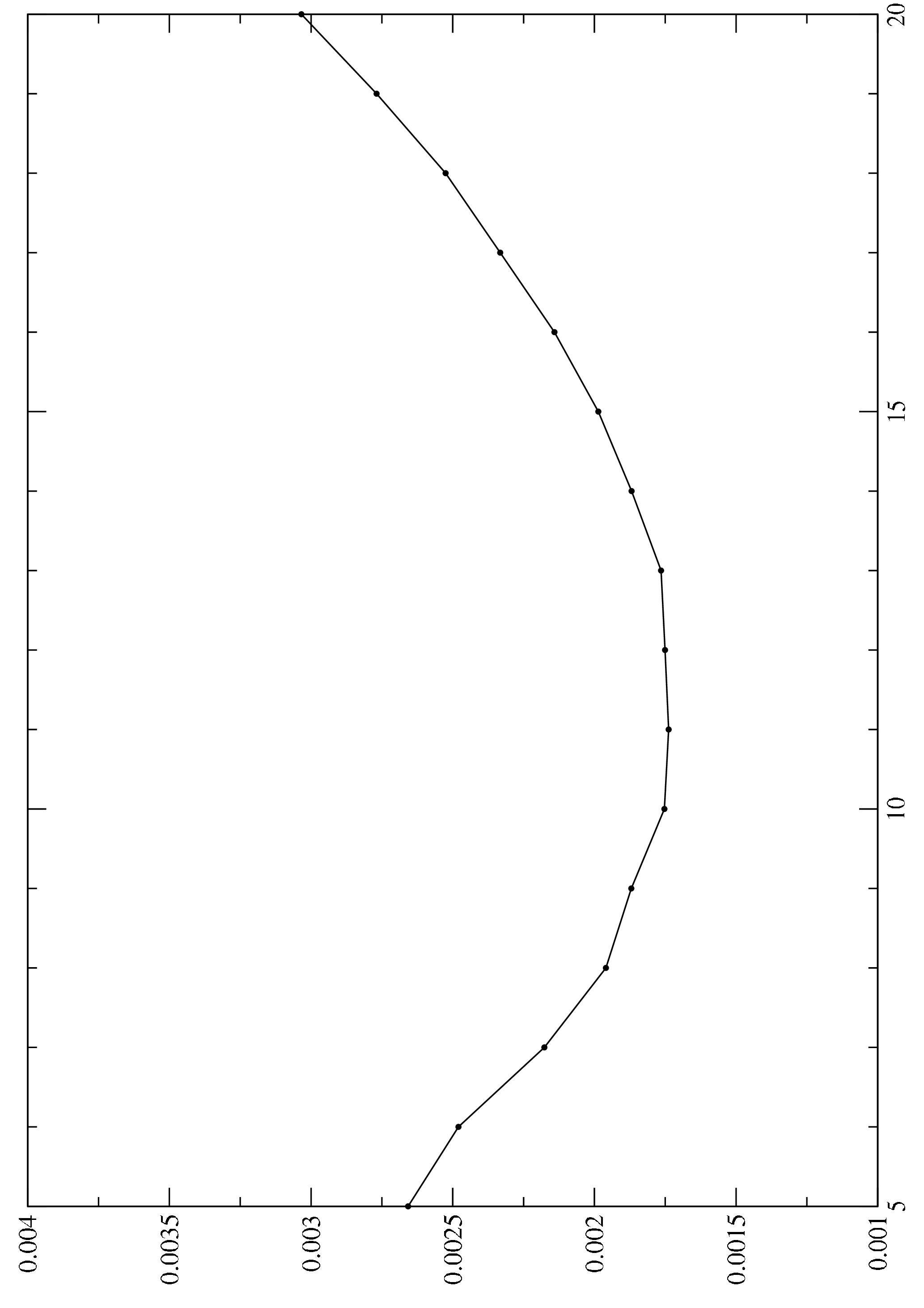}
\caption{We show the graphs of $\rho_\mathrm{num}(D,\tau)$ in dependence on $D$ for $\tau=70,700,7000$ (top down). Observe that the x-axis only begin in $5$ due to the fact that small price impacts cannot be distinguished from the noise contained in the signal.}
\label{fig:rho2}
\end{figure}

\subsubsection{$\bar{\rho}_\mathrm{num}$ is a function of $D$}
In contrast to the time dependence, the dependence on the order's price impact requires a generalization of the AFS theorem, stated in Section \ref{sec:AFSmodel}. Now, the resilience speed $\bar{\rho}:\mathbb{R}\to (0,\infty)$ is a continously differentiable function of $D^A$. In particular, the equations (\ref{eq:eq2mod2}) and (\ref{eq:eq2mod3}), which describe the price recovery in the AFS model, change to
\begin{equation}
D^A_t := e^{-\bar{\rho}(D_{t_n+}^A)(t-t_n)D_{t_n+}^A}\textrm{ for }t\in(t_n, t_{n+1}],
\end{equation}
\begin{equation}
D^B_t := e^{-\bar{\rho}(D_{t_n+}^B)(t-t_n)D_{t_n+}^B}\textrm{ for }t\in(t_n, t_{n+1}].
\end{equation}
We denote this modified model as {\it Version 2} of the {\it generalized AFS model}.

For the following theorem concerning the optimal trading strategy for the GAFS model, we need two technical assumptions:
\begin{equation}\label{eq:rhoAss1}
\textrm{The range of }\bar{\rho}\textrm{ is a subset of }[k,K],\ 0<k<K<\infty,\textrm{ and}
\end{equation}
\begin{equation}\label{eq:rhoAss2}
1-\tau\bar{\rho}'(x)x > 0\textrm{ for all }x\in\mathbb{R}.
\end{equation}
The first assumption bounds the relaxation speed, the second assumption ensures that a larger impact cannot overtake a smaller one in the recovery phase as we will see in Lemma \ref{lem:inBet}.
\begin{thm}[Optimal stratey for the generalized AFS model, Version 2]\label{thm:weiss2}
Suppose that $\bar{\rho}$ fulfils (\ref{eq:rhoAss1}) and (\ref{eq:rhoAss2}), and that $f$ satisfies
\begin{equation}\label{eq:fCond2}
\lim_{|x|\to\infty}x^2\inf_{y\in[e^{\tau\bar{\rho}(x)}x,x]}f(y)=\infty.
\end{equation}
Furthermore, let the function
\begin{equation}
h_2(x):=x\frac{f(x)-e^{-2\tau\bar{\rho}(x)}f(e^{-\tau\bar{\rho}(x)}x)(1-\tau\bar{\rho}'(x)x)}{f(x)-e^{-\tau\bar{\rho}(x)}f(e^{-\tau\bar{\rho}(x)}x)(1-\tau\bar{\rho}'(x)x)}
\end{equation}
be one-to-one. Then there exists a unique optimal strategy $\xi^{(2)}=(\xi^{(2)}_0,\dots,\xi^{(2)}_N)\in\hat{\Xi}$. The initial market order $\xi^{(2)}_0$ is the unique solution of the equation
\begin{equation}\label{eq:mainEq2}
F^{-1}\left(X_0-N\left[\xi^{(2)}_0-F\left(e^{-\tau\bar{\rho}(F^{-1}(\xi^{(2)}_0))}F^{-1}(\xi^{(2)}_0)\right)\right]\right)=h_2(F^{-1}(\xi^{(2)}_0)),
\end{equation}
the intermediate orders are given by
\begin{equation}\label{eq:inOrders2}
\xi^{(2)}_1=\dots=\xi^{(2)}_{N-1}=\xi^{(2)}_0-F\left(e^{-\tau\bar{\rho}(F^{-1}(\xi^{(2)}_0))}F^{-1}(\xi^{(2)}_0)\right),
\end{equation}
and the final order is determined by
\begin{equation}
\xi^{(2)}_N=X_0-\sum_{n=0}^N \xi^{(2)}_n.
\end{equation}
In particular, the optimal stratey is deterministic. Moreover, it consists only of nontrivial buy orders, that is $\xi^{(2)}_n>0$ for all $n$.
\end{thm}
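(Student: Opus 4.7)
The plan is to adapt the strategy that Alfonsi, Fruth and Schied use for their Theorem 5.1, tracking the extra terms that arise because $\bar{\rho}$ now depends on the current price impact. First, I reduce the stochastic minimization to a deterministic one. Using the martingale property of $A^0$, the tower property, and the budget constraint $\sum_n\xi_n=X_0$, I obtain
\[
\mathbb{E}\Bigl[\sum_{n=0}^N A^0_{t_n}\,\xi_n\Bigr]=A_0\,X_0,
\]
so only the deterministic part
\[
\widetilde{\mathscr{C}}(\xi):=\mathbb{E}\Bigl[\sum_{n=0}^N\int_{D^A_{t_n}}^{D^A_{t_n+}} x\,f(x)\,dx\Bigr]
\]
depends on $\xi$. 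Since the dynamics of $D^A$ is a purely deterministic function of the $\xi_n$, a pathwise selection argument shows that the infimum is attained by a deterministic strategy, reducing the problem to a finite-dimensional optimization over $(\xi_0,\ldots,\xi_N)\in\mathbb{R}^{N+1}$.

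Next, I reparametrize in terms of the post-trade impacts $d_n:=D^A_{t_n+}$, so that $D^A_{t_{n+1}}=e^{-\tau\bar{\rho}(d_n)}d_n$ and $\xi_n=F(d_n)-F(e^{-\tau\bar{\rho}(d_{n-1})}d_{n-1})$, with the convention $D^A_{t_0}=0$. The cost and constraint then read
\[
\widetilde{C}(d_0,\ldots,d_N)=\sum_{n=0}^N\int_{D^A_{t_n}}^{d_n}x\,f(x)\,dx,\qquad \sum_{n=0}^N\bigl[F(d_n)-F(D^A_{t_n})\bigr]=X_0.
\]
Assumption (\ref{eq:rhoAss2}) guarantees that $d\mapsto e^{-\tau\bar{\rho}(d)}d$ is strictly increasing, which is what will make the intermediate orders positive and what Lemma \ref{lem:inBet} is designed to ensure.

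I then apply Lagrange multipliers. Differentiating the coupled terms via the chain rule on $e^{-\tau\bar{\rho}(d_n)}d_n$ gives
\[
\partial_{d_n}\widetilde{C}=d_n\bigl[f(d_n)-e^{-2\tau\bar{\rho}(d_n)}f(e^{-\tau\bar{\rho}(d_n)}d_n)\bigl(1-\tau\bar{\rho}'(d_n)d_n\bigr)\bigr]
\]
and an analogous expression for the constraint derivative without the leading factor $d_n$, which is strictly positive by (\ref{eq:rhoAss2}). For the interior indices $n=0,\ldots,N-1$ the stationarity equation $\partial_{d_n}\widetilde{C}=\lambda\,\partial_{d_n}(\text{constraint})$ collapses to $h_2(d_n)=\lambda$, while at the free endpoint one simply gets $d_N=\lambda$. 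Since $h_2$ is assumed one-to-one, this forces $d_0=d_1=\cdots=d_{N-1}$ and $h_2(d_0)=d_N$. Substituting this structure into the constraint and writing $\xi_0=F(d_0)$ reproduces exactly equation~(\ref{eq:mainEq2}); the formulas (\ref{eq:inOrders2}) and for $\xi^{(2)}_N$ then follow by unrolling the telescoping sum.

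It remains to verify existence and uniqueness of a solution to~(\ref{eq:mainEq2}), to check that the critical point is the global minimum, and to confirm positivity of all $\xi^{(2)}_n$. I would use the coercivity condition~(\ref{eq:fCond2}), together with the bounds~(\ref{eq:rhoAss1}), to show that $\widetilde{C}$ blows up as $|d_n|\to\infty$ on the constraint manifold, guaranteeing a minimizer; strict monotonicity inherited from the one-to-one property of $h_2$ then yields that the two sides of~(\ref{eq:mainEq2}) cross at exactly one $\xi_0>0$, which combined with the uniqueness of the critical point promotes it to the global minimum. Positivity $\xi^{(2)}_n>0$ for $1\le n\le N-1$ is immediate from the strict monotonicity of $d\mapsto e^{-\tau\bar{\rho}(d)}d$, and $\xi^{(2)}_N>0$ is extracted from~(\ref{eq:mainEq2}) by a sign analysis using $h_2(d_0)>0$. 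The main technical obstacle compared to the constant-$\rho$ case is controlling the extra factor $(1-\tau\bar{\rho}'(d)d)$ that now contaminates every first-order condition and every monotonicity argument; assumption~(\ref{eq:rhoAss2}) is tailored precisely to keep this factor strictly positive, and without it the argument singling out a unique minimizer would break down.
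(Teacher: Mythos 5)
Your overall architecture matches the paper's: reduce to a deterministic finite-dimensional problem via the martingale property of $A^0$, get existence of a minimizer from coercivity under (\ref{eq:fCond2}), derive first-order conditions with a Lagrange multiplier, use injectivity of $h_2$ to force $D_{t_0+}=\dots=D_{t_{N-1}+}$, and close with a one-dimensional uniqueness and positivity analysis. Your reparametrization in the post-trade impacts $d_n=D^A_{t_n+}$ is in fact a cleaner route to the stationarity conditions than the paper's recursive formula for $\partial C^{(2)}/\partial x_n$ (Lemma \ref{lem:diff2}): the computation $\partial_{d_n}\widetilde{C}=d_n\bigl[f(d_n)-e^{-2\tau\bar{\rho}(d_n)}f(e^{-\tau\bar{\rho}(d_n)}d_n)(1-\tau\bar{\rho}'(d_n)d_n)\bigr]$ is correct, the endpoint condition $d_N=\lambda$ combined with $h_2(d_n)=\lambda$ for $n<N$ reproduces (\ref{eq:mainEq2}), and the extra factor $(1-\tau\bar{\rho}'d)$ is tracked correctly.

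There are, however, two genuine gaps. First, your reduction silently assumes every order is priced off $A^0+D^A$, but admissible strategies in $\hat{\Xi}$ may contain sell orders $\xi_n<0$, whose cost is $\int_{D^B_{t_n}}^{D^B_{t_n+}}(B^0_{t_n}+x)f(x)\,dx$ and which evolve the separate process $D^B$; the functional you minimize over all of $\mathbb{R}^{N+1}$ is therefore not the true cost of a mixed buy/sell strategy. The paper bridges this with the auxiliary unified dynamics $D$ and the simplified cost $\bar{\pi}$, proving via Lemma \ref{lem:inBet} (this is the real role of assumption (\ref{eq:rhoAss2}), beyond making $d\mapsto e^{-\tau\bar{\rho}(d)}d$ increasing) that $\bar{\pi}_{t_n}(\xi_n)\leq\pi_{t_n}(\xi_n)$ always, with equality for buy-only strategies; since the minimizer of the simplified functional turns out to be buy-only, it also minimizes the true cost. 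Without some such comparison argument your Lagrange analysis does not control genuinely mixed strategies. Second, "strict monotonicity inherited from the one-to-one property of $h_2$" is not enough to get uniqueness of the solution of (\ref{eq:mainEq2}): one must also show that the left-hand side is strictly decreasing in $\xi_0$, which amounts to proving $k(x):=f(x)-e^{-\tau\bar{\rho}(x)}f(e^{-\tau\bar{\rho}(x)}x)(1-\tau\bar{\rho}'(x)x)>0$ everywhere. This is not automatic; the paper derives it in Lemma \ref{lem:zero2} by a sign-comparison of $k$ with $k_2$ using $h_2=xk_2/k$, bijectivity of $h_2$, and $k(0)=f(0)(1-e^{-\tau\bar{\rho}(0)})>0$. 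Your sketch would need to supply this step before the crossing argument for (\ref{eq:mainEq2}) and the positivity of $\xi^{(2)}_N$ go through.
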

\begin{proof}
See Appendix \ref{sec:proof2}.
\end{proof}
Observe that the intermediate orders of the optimal strategy, defined in (\ref{eq:inOrders2}), have the same size. Furthermore, they suggest to purchase exactly that volume that has recovered since the last trade. The GAFS model has inherited this feature from the AFS model. Yet, this observation means that also the $D^A_{t_n+}$ are equal to each other for all $n\in\{0,\dots,N-1\}$, and thus, $\bar{\rho}$ is only evaluated for one value. In other words, although $\bar{\rho}$ is a function, the optimal strategy {\it uses} only one value. Of course, if $\bar{\rho}\equiv \rho$ for some constant $\rho$ in the GAFS model both models, the GAFS and the AFS, coincide. This is the main advantage of the GAFS theorem: It determines the {\it right} resilience speed from $\bar{\rho}$; a manual calibration, as in the AFS model, is not needed anymore.

As already mentioned, our simulations for the calibration of $\rho$ in Version $1$ of the AFS model result in the same problems as described for Version $2$. Especially, $\rho$ becomes volume impact dependent, motivating the GAFS model, Version $1$: Now, the resilience speed $\bar{\rho}:[0,\infty)\to (0,\infty)$ is a {\it twice} differentiable function of $E^A$. In particular, the equations (\ref{eq:eq2mod1}) and (\ref{eq:eq2mod3}) from the AFS model become
\begin{equation}
E^A_t := e^{-\bar{\rho}(E^A_{t_n+})(t-t_n)}E^A_{t_n+},\ t\in(t_n,t_{n+1}],
\end{equation}
\begin{equation}
E^B_t := e^{-\bar{\rho}(E^B_{t_n+})(t-t_n)}E^B_{t_n+},\ t\in(t_n,t_{n+1}].
\end{equation}
in the GAFS model. Then, the following theorem determines the optimal trading strategy in the set of all admissible strategies $\hat{\Xi}$:

\begin{thm}[Optimal stratey for the generalized AFS model, Version 1]\label{thm:weiss1}
Suppose that $\bar{\rho}$ fulfils the assumptions (\ref{eq:rhoAss1}) and (\ref{eq:rhoAss2}), and additionally
\begin{equation}\label{eq:rhoCond1}
e^{-\bar{\rho}(x)\tau}\left(1-\tau\bar{\rho}'(x)x\right)<1\ \textrm{ for all $x\in\mathbb{R}$}.
\end{equation}
Furthermore, let the function
\begin{equation}
h_1(x):=\frac{F^{-1}(x)-e^{-\bar{\rho}(x)\tau}\left(1-\tau\bar{\rho}'(x)x\right)F^{-1}(e^{-\bar{\rho}(x)\tau}x)}{1-e^{-\bar{\rho}(x)\tau}\left(1-\tau\bar{\rho}'(x)x\right)}
\end{equation}
be one-to-one. Then there exists a unique optimal strategy $\xi^{(1)}=(\xi^{(1)}_0,\dots,\xi^{(1)}_N)\in\hat{\Xi}$. The initial market order $\xi^{(1)}_0$ is the unique solution of the equation
\begin{equation}
F^{-1}\left(X_0-N\xi^{(1)}_0(1-e^{-\bar{\rho}(\xi^{(1)}_0)\tau})\right)=h_1(\xi^{(1)}_0),
\end{equation}
the intermediate orders are given by
\begin{equation}
\xi^{(1)}_1=\dots=\xi^{(1)}_{N-1}=\xi^{(1)}_0(1-e^{-\bar{\rho}(\xi^{(1)}_0)\tau}),
\end{equation}
and the final order is determined by
\begin{equation}
\xi^{(1)}_N=X_0-\sum_{n=0}^N \xi^{(1)}_n.
\end{equation}
In particular, the optimal stratey is deterministic. Moreover, it consists only of nontrivial buy orders, that is $\xi_n>0$ for all $n$.
\end{thm}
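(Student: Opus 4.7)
The plan is to follow the three-step pattern used for Theorem~\ref{thm:weiss2}: reduce the stochastic optimization over $\hat\Xi$ to a deterministic finite-dimensional problem, change variables to the post-trade volume impacts, and then solve the resulting first-order conditions.

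First, introduce $J(e):=\int_0^{F^{-1}(e)} y\, f(y)\,dy$, so that $J'(e)=F^{-1}(e)$ and the per-trade cost decomposes as $\pi_{t_n}(\xi_n)=A^0_{t_n}\xi_n+[J(E^A_{t_n+})-J(E^A_{t_n})]$. Since $A^0$ is a martingale and $\sum_n\xi_n=X_0$, optional sampling gives $\mathbb{E}[\sum_n A^0_{t_n}\xi_n]=A_0 X_0$ for every $\xi\in\hat\Xi$, so only the impact-cost part, which depends solely on the volume-impact trajectory, matters. That trajectory is governed by the deterministic recursion $E^A_{t_{n+1}}=G(E^A_{t_n+})$ where $G(e):=e^{-\bar\rho(e)\tau}\,e$; once strict convexity of the reduced cost in $(\xi_0,\dots,\xi_N)$ is established, Jensen's inequality applied conditionally on each $\mathscr{F}_{t_n}$ restricts the minimizer to deterministic strategies, and the coercivity furnished by~(\ref{eq:infVol}) guarantees existence.

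Next I would change variables to $e_n:=E^A_{t_n+}$ (with $e_0=\xi_0$ and $\xi_{n+1}=e_{n+1}-G(e_n)$). Assumptions~(\ref{eq:rhoAss2}) and~(\ref{eq:rhoCond1}) yield $0<G'(e)<1$, so $e\mapsto e-G(e)$ is strictly monotone and the map $\xi\leftrightarrow(e_0,\dots,e_N)$ is a global bijection. The problem becomes
\begin{equation*}
\min_{(e_0,\dots,e_N)}\;\sum_{n=0}^N J(e_n)-\sum_{n=0}^{N-1} J(G(e_n))\qquad\text{s.t.}\qquad\sum_{n=0}^N e_n-\sum_{n=0}^{N-1} G(e_n)=X_0.
\end{equation*}
A Lagrange multiplier $\lambda$ yields $J'(e_N)=\lambda$ and $J'(e_n)-G'(e_n)\,J'(G(e_n))=\lambda\,(1-G'(e_n))$ for $n<N$. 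Substituting $J'=F^{-1}$, $G'(e)=e^{-\bar\rho(e)\tau}(1-\tau\bar\rho'(e)e)$, and $\lambda=F^{-1}(e_N)$, the latter identity collapses to $F^{-1}(e_N)=h_1(e_n)$ for every $n<N$.

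Injectivity of $h_1$ then forces $e_0=e_1=\cdots=e_{N-1}=:\xi_0$, whence the intermediate orders equal $\xi_0(1-e^{-\bar\rho(\xi_0)\tau})$, the admissibility constraint gives $e_N=X_0-N\xi_0(1-e^{-\bar\rho(\xi_0)\tau})$, and $F^{-1}(e_N)=h_1(\xi_0)$ is the scalar equation for $\xi_0$ stated in the theorem. Positivity $\xi_n>0$ follows because $\bar\rho>0$ makes $1-e^{-\bar\rho(\xi_0)\tau}>0$, and a sign check on $h_1$ at $0$ combined with the scalar equation ensures $e_N>G(\xi_0)$. The main obstacle I expect is (i) showing that this unique critical point is a global minimum: since $\sum J\circ G$ enters with a minus sign and $J\circ G$ need not be convex on its own, one needs a second-variation analysis that uses the quantitative bounds~(\ref{eq:rhoAss1})--(\ref{eq:rhoCond1}) and the twice-differentiability of $\bar\rho$ (which is strictly stronger than the hypothesis in Version~2) to establish strict convexity of the reduced cost, and this same convexity is what makes the initial reduction to deterministic strategies go through.
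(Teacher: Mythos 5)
Your skeleton is right in outline --- the martingale/summation-by-parts reduction, the Lagrange system, and the change of variables to the post-trade impacts $e_n=E^A_{t_n+}$ (which cleanly reproduces the first-order condition $F^{-1}(e_N)=h_1(e_n)$ and hence the stated strategy) --- but two load-bearing steps are handled incorrectly. First, you make both the reduction to deterministic strategies and the global optimality of the critical point rest on \emph{strict convexity of the reduced cost}, which you correctly identify as unproven; the problem is that this convexity is neither available nor needed. Since $\bar\rho$ is a nonconstant function entering the recursion $E^A_{t_{n+1}}=e^{-\bar\rho(E^A_{t_n+})\tau}E^A_{t_n+}$, the composite cost is not convex in $(\xi_0,\dots,\xi_N)$ in general, so a conditional-Jensen argument cannot be expected to close. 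The argument that actually works is pointwise: for any admissible $\xi$, the realization $\xi(\omega)$ lies in the deterministic simplex $\Xi$ almost surely, so $\mathbb{E}[C^{(1)}(\xi)]\geq\min_{\Xi}C^{(1)}$ with no convexity required; existence of that minimum follows from a genuine coercivity estimate ($C^{(1)}(\xi)\to\infty$ as $\|\xi\|_\infty\to\infty$, proved via the bound $G(x)-G(cx)\geq(1-c)|F^{-1}(cx)||x|$ and a triangular change of coordinates, Lemma~\ref{lem:minExists1} --- this is more than an appeal to (\ref{eq:infVol})); and global optimality of your critical point follows because the Lagrange system together with injectivity of $h_1$ and strict monotonicity of $\hat h_1(x)=h_1(x)-F^{-1}(X_0-N(1-e^{-\bar\rho(x)\tau})x)$ (which is exactly where (\ref{eq:rhoCond1}) is used, Lemma~\ref{lem:zero}) shows there is \emph{at most one} critical point on the affine set $\Xi$, so the existing minimizer must be it. No second variation is needed.

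Second, your cost decomposition $\pi_{t_n}(\xi_n)=A^0_{t_n}\xi_n+[J(E^A_{t_n+})-J(E^A_{t_n})]$ is only valid when every $\xi_n\geq 0$: admissible strategies may contain sell orders, whose cost is computed from the separate processes $D^B,E^B$, so the objective is not a function of the $E^A$-trajectory alone. One must introduce the auxiliary single-process dynamics $(D,E)$ and prove the comparison $E^B_t\leq E_t\leq E^A_t$ (Lemma~\ref{lem:inBet}) --- this is where assumption (\ref{eq:rhoAss2}) does its real work, guaranteeing that $y\mapsto(x+y)e^{-\bar\rho(x+y)\tau}$ is increasing so that a larger impact cannot be overtaken during relaxation --- which yields $\bar\pi_{t_n}\leq\pi_{t_n}$ with equality for buy-only strategies. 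Only then does minimizing the simplified functional, whose minimizer turns out to have all components positive, also minimize the true one. Your positivity check for $\xi_N$ is likewise too vague: it follows from the vanishing of $\frac{d}{dx}C^{(1)}_0$ at $\xi_0^{(1)}$ together with strict monotonicity of $F^{-1}$, not merely from a sign check of $h_1$ at zero.
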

\begin{proof}
See Appendix \ref{sec:proof1}.
\end{proof}
As in Version $2$ of the GAFS model, $\bar{\rho}$ is only evaluated in one value, and if $\bar{\rho}\equiv\rho$ the best strategies of the GAFS and the AFS models coincide.

\section{Numerical results}\label{sec:numerics}
Let us turn to the numerical results of this paper. Again, we focus on Version $2$ and use the parameter values determined in the last section to calculate the GAFS optimal strategies and to apply them in the Opinion Game. We show first that the resulting costs show an {\it expected} behavior on a general level, and that the AFS model with a suboptimal value for $\rho$ suggests a strategy that produces significantly higher costs than the corresponding GAFS strategy. Afterwards, we compare the costs sampled in the Opinion Game to the costs predicted by the GAFS model, and find large differences. We refer to the values for $f$ and $\bar{\rho}$, $\bar{\rho}_\mathrm{num}$, as determined in the Sections \ref{sec:orderBook} and \ref{sec:parameters2}.
\begin{table}
\begin{tabular}{|r|r||r|r|r||r|r|r|}
\hline
$N$ & $T$ & $\xi^{(2)}_0$ & $\xi^{(2)}_1$ & $\xi^{(2)}_N$ & Predicted & Sampled & Samp/Pred\\
\hline
\hline
$40$ & $400$ & $8.95$ & $4.74$ & $6.38$ & $701.47$ & $1867.74$ & $266\%$\\
\hline
$40$ & $4000$ & $6.13$ & $4.81$ & $6.15$ & $500.24$ & $1573.50$ & $315\%$\\
\hline
$40$ & $40\,000$ & $5.16$ & $4.86$ & $5.40$ & $392.42$ & $1076.89$ & $274\%$\\
\hline
\hline
$50$ & $400$ & $8.29$ & $3.80$ & $5.29$ & $691.94$ & $1853.37$ & $268\%$\\
\hline
$50$ & $4000$ & $5.20$ & $3.88$ & $4.94$ & $462.51$ & $1535.96$ & $332\%$\\
\hline
$50$ & $40\,000$ & $4.26$ & $3.90$ & $4.82$ & $349.26$ & $1014.42$ & $290\%$\\
\hline
\hline
$80$ & $400$ & $7.55$ & $2.37$ & $5.63$ & $691.65$ & $1832.69$ & $265\%$\\
\hline
$80$ & $4000$ & $3.73$ & $2.44$ & $3.37$ & $387.98$ & $1464.03$ & $377\%$\\
\hline
$80$ & $40\,000$ & $2.65$ & $2.46$ & $2.91$ & $231.67$ & $914.17$ & $395\%$\\
\hline
\end{tabular}
\caption{The optimal strategies according to the GAFS model, Version $2$, for $X=200$ and several values for $N$ and $T$.}
\label{tab:table2a}
\end{table}

Table \ref{tab:table2a} shows the GAFS optimal strategies and their costs for different values of $T$ and $N$. We consider two kinds of costs. The {\it predicted costs} are the impact costs that are theoretically predicted by the (G)AFS model. Here, we assume that the market behaves as described in Section \ref{sec:AFSmodel}. The {\it sampled costs} are the average of $500$ samples with the given strategy in the Opinion Game. Observe first that the predicted and the sampled costs decrease if the trading time or the number of trading opportunities increase. Of course, this is no special feature of the (G)AFS strategies; every fixed strategy benefits from a larger $\tau$, which is implied by a greater $T$, and additional trading opportunities can be used, but do not have to be used. Thus, every reasonable strategy can only perform better with larger $T$ or $N$. Nevertheless, the costs of the GAFS strategies show a {\it reasonable} behavior.

Furthermore, the GAFS strategies perform better than the AFS strategies: Recall that the AFS model with the right value for $\rho$ results in the same optimal strategy as the GAFS model. Moreover, the (G)AFS model assumes an exponential decay of the price impact (see (\ref{eq:naivRho})). We have taken this assumption into account by introducing $\langle \bar{\rho}\rangle$'s regression function $\hat{p}$ in (\ref{eq:canRho}), which was of the form
\begin{equation}
\hat{p}_t := A+Be^{-\hat{\rho}t}.
\end{equation}
Table \ref{tab:table2b} shows the optimal strategies and their costs for $(X,T,N)=(200,4000,80)$ with respect to the AFS model with $\rho=\hat{\rho}$ and the GAFS model with $\bar{\rho}=\bar{\rho}_\mathrm{num}$. The example shows that a naive guess of a good $\rho$ can lead to much higher costs: The AFS costs amount $253\%$ of the GAFS costs in prediction, and still $108\%$ in the samples.
 
The last two paragraphs have shown that the GAFS strategies are reasonable and superior to the AFS stratgies. However, returning to Table \ref{tab:table2a}, we see that the predicted and the sampled costs for the individual parameter sets differ strongly. The last column shows both kinds of costs in relation to each other. Obviously, the sampled costs are multiple times higher. This observation is a strong evidence that the assumptions of the (G)AFS model are insufficient to capture the whole complexity of the order book dynamics in the Opinion Game. It is doubtful if the (G)AFS model really suggests optimal trading strategies for this artificial market environment. With regard to the Opinion Game features concerning the order book behavior that we have discussed in Section \ref{sec:opinionGame}, it is highly unlikely that the (G)AFS strategies minimize the costs in real world markets.
\begin{table}
\begin{tabular}{|r||r|r|r||r|r|}
\hline
Strategy & $\xi^{(2)}_0$ & $\xi^{(2)}_1$ & $\xi^{(2)}_N$ & Predicted & Sampled\\
\hline
\hline
GAFS & $3.73$ & $2.44$ & $3.37$ & $387.98$ & $1464.03$\\
\hline
AFS & $21.02$ & $0.97$ & $102.26$ & $979.97$ & $1584.12$\\
\hline
\end{tabular}
\caption{The optimal strategies and their costs for the AFS model with $\rho=\hat{\rho}$ and the GAFS model with $\bar{\rho}$ from Section \ref{sec:parameters2}. $(X,T,N)=(200,4000,80)$.}\label{tab:table2b}
\end{table}

\section{Conclusions}
In this paper, we have tried to apply the AFS model to an artificial market environment. The elegance of the AFS model, the order book approach and the explicit results for the optimal trading strategies, cannot be denied. Yet, the problems we faced in calibrating the model to our market pose the question if the AFS model assumptions are oversimplified. We point out again that the problems we had to handle, the permanent impact and the non-exponential decay of the impact, are {\it not} artificial. It is well known that those effects are also characteristic for real markets. Even if it is possible to bypass some problems or to extend the model in a suitable way as we did by introducing the GAFS model, the question remains if a next generation of large order models is necessary. In \cite{gatheral10}, the authors call the large order market models with an underlying order book models of the second generation dissociation of (first generation) large order models working with fixed price impact functions as described in the introduction. Here, a third generation of models is conceivable taking into account that the order book shape is not constant such that there is no one-to-one correspondence between the price and the volume and that the market adapts to periodically executed large orders. Yet, it is also obvious those models would be of much higher complexity and analytic results would be hard to get; a wide subject for future research.

\textbf{Acknowledgement.} We thank Anton Bovier (University of Bonn), who suggested the article's subject to us and supported our research with many remarks, and Antje Fruth (QPL Berlin) for the discussions concerning the AFS model. Especially, we thank Evangelia Petrou (University of Bonn), who contributed to the article's quality with a countless number of critical comments, suggestions and queries.

\appendix
\section{Proofs of the Theorems \ref{thm:weiss2} and \ref{thm:weiss1}}\label{sec:proof}
The structure of the proofs remains the same as in the proofs of the corresponding AFS theorems (see Appendices A to C in \cite{alfonsi09}). Nevertheless, we need to justify the constraints on $\bar{\rho}$; furthermore, the calculations become more complicated by our generalization. For simplicity, we assume $t_0 = 0$ in this section.

We start with the introduction of slightly changed dynamics for the GAFS model and the reduction of the admissible strategies to deterministic ones. For any admissible strategy $\xi$, the new dynamics is defined by the processes $D:=(D_t)_{t\geq 0}$ and $E:=(E_t)_{t\geq 0}$. We set $D_0 = D_{t_0} := 0 =: E_{t_0} = E_0$ and
\begin{equation}
E_{t_n+}:=E_{t_n}+\xi_n\ \textrm{and}\ D_{t_n+}:=F^{-1}(F(E_{t_n})+\xi_n)
\end{equation}
for the trading times $t_0,\dots,t_N$. The processes' values between two successive trading times $t\in (t_n,t_{n+1})$ are given by
\begin{equation}
\begin{array}{ll}
E_t := e^{-\bar{\rho}(E_{t_n+})(t-t_n)}E_{t_n+}&\textrm{for Version } 1;\\
D_t := e^{-\bar{\rho}(D_{t_n+})(t-t_n)}D_{t_n+}&\textrm{for Version } 2.
\end{array}
\end{equation}
Given one process, we can recover the other one by the equations (\ref{eq:relationDE}):
\begin{equation}
\begin{array}{ccc}
E_t = F(D_t)&\textrm{and}&D_t=F^{-1}(E_t).
\end{array}
\end{equation}

\begin{lemma}\label{lem:inBet}
Under assumption (\ref{eq:rhoAss2}),
\begin{equation}\label{eq:inBetween}
E^B_t\leq E_t\leq E^A_t\textrm{ and }D^B_t\leq D_t\leq D^A_t
\end{equation}
for all $t\geq 0$. In the special case that all $\xi_n$ are non-negative, we have $D^A=D$ and $E^A=E$.
\end{lemma}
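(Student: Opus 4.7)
The plan is to argue by induction on the trading index $n$ that $E^B_{t_n+}\leq E_{t_n+}\leq E^A_{t_n+}$ at every post-trade time and then to propagate these inequalities continuously across each recovery interval $(t_n,t_{n+1}]$. The base case $n=0$ is immediate since $E^B=E=E^A=0$ and $D^B=D=D^A=0$ on $[0,t_0]$.

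The analytic workhorse will be the recovery map $\phi_s(x):=xe^{-\bar{\rho}(x)s}$, which I want to show is strictly increasing on all of $\mathbb{R}$ for every $s\in[0,\tau]$. Differentiating gives
\begin{equation*}
\phi_s'(x)=e^{-\bar{\rho}(x)s}\bigl(1-s\bar{\rho}'(x)x\bigr);
\end{equation*}
if $\bar{\rho}'(x)x\leq 0$ the bracket is at least $1$, while if $\bar{\rho}'(x)x>0$ then $s\bar{\rho}'(x)x\leq\tau\bar{\rho}'(x)x<1$ by (\ref{eq:rhoAss2}). Hence $\phi_s'(x)>0$ uniformly in $s\in[0,\tau]$. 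This is exactly the ``non-overtaking'' property the paper advertises right after (\ref{eq:rhoAss2}).

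For the inductive step at time $t_n$ I would decompose $\xi_n=\xi_n^++\xi_n^-$ with $\xi_n^+:=\max(\xi_n,0)\geq 0$ and $\xi_n^-:=\min(\xi_n,0)\leq 0$. Since $E^A$ is incremented only by the buy part of the order flow while $E$ is incremented by the full $\xi_n$, the inductive hypothesis yields
\begin{equation*}
E^A_{t_n+}-E_{t_n+}=(E^A_{t_n}-E_{t_n})-\xi_n^-\geq 0,
\end{equation*}
and symmetrically $E_{t_n+}-E^B_{t_n+}=(E_{t_n}-E^B_{t_n})+\xi_n^+\geq 0$. On $(t_n,t_{n+1}]$ the Version~1 dynamics give $E_t=\phi_{t-t_n}(E_{t_n+})$ and likewise for $E^A,E^B$, so the monotonicity of $\phi_s$ transports the post-trade ordering across the whole recovery interval; the $D$-inequality then follows from $D=F^{-1}(E)$ and strict monotonicity of $F^{-1}$. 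For Version~2 I would run the same argument on $D$ directly, using that the trade update $D_{t_n+}=F^{-1}(E_{t_n+})$ preserves ordering, and then recover the $E$-inequality from $E=F(D)$.

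The final claim is then a direct corollary: if every $\xi_n$ is non-negative, then $\xi_n^+=\xi_n$ at each trading time, so $E$ and $E^A$ satisfy identical recursions with the same initial value $0$, hence coincide on $[0,\infty)$, and $D=D^A$ follows from (\ref{eq:relationDE}). The one genuine obstacle I anticipate is verifying the uniform monotonicity of $\phi_s$ for $s\in[0,\tau]$, since the sign of $\bar{\rho}'(x)x$ is not controlled a priori; once the case distinction above is in place, the remainder of the proof is essentially bookkeeping.
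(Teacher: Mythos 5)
Your argument is correct and follows essentially the same route as the paper: the heart of both proofs is that assumption (\ref{eq:rhoAss2}) makes the recovery map $x\mapsto xe^{-\bar{\rho}(x)s}$ strictly increasing, which prevents the ordering of the impact processes from being reversed between trades, and the trade-time updates preserve the ordering because $E^A$ (resp. $E^B$) only absorbs the buy (resp. sell) part of the order. Your version is in fact marginally tidier, since proving monotonicity of $\phi_s$ on all of $\mathbb{R}$ and for all $s\in[0,\tau]$ dispenses with the paper's same-sign case restriction and explicitly covers the intermediate times $t\in(t_n,t_{n+1})$.
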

\begin{proof}
To see that $D^A=D$ and $E^A=E$ if $\xi$ consists of buy orders only, observe that the new dynamics matches exactly the original ones for such a $\xi$.

For the general case, we consider $E^B_t\leq E_t$; the other inequalities follow equivalently. Observe that it is sufficient to prove
\begin{equation}
E^B_{t_{n+1}}\leq E_{t_{n+1}}
\end{equation}
for
\begin{equation}
E^B_{t_n+}\leq E_{t_n+},
\end{equation}
since both function are exponentially decreasing on $(t_n,t_{n+1}]$, and the relative order of $E^B$ and $E$ cannot be reversed from $t_n$ to $t_n+$. Furthermore, we can restrict ourselves to the case that $E^B_{t_n+}$ and $E_{t_n+}$ have the same sign, as the signs canno change in the considered time interval, $(t_n, t_{n+1}]$. We are consequently done if we can show that the inequality
\begin{equation}\label{eq:staysBigger}
|x+y|e^{-\bar{\rho}(x+y)\tau} > |x|e^{-\bar{\rho}(x)\tau}
\end{equation}
for all $(x,y)\in\mathbb{R}^2$ with $\mathrm{sgn}(x+y)$=$\mathrm{sgn}(x)$ and $|x+y| > |x|$. Observe that we have equality in the equation above if we consider the trivial case that $y=0$. We define a function $u_x:\mathbb{R}\to\mathbb{R}$ by
\begin{equation}
u_x(y) := (x+y)e^{-\bar{\rho}(x+y)\tau}.
\end{equation}
Differentiation yields
\begin{equation}
u'_x(y) = e^{-\bar{\rho}(x+y)\tau}(1-\tau\bar{\rho}'(x)x).
\end{equation}
The right hand side of this equation is positive by assumption (\ref{eq:rhoAss2}), thus $u_x$ is strictly increasing. Since $u_x(0) = xe^{-\bar{\rho}(x)\tau}$, (\ref{eq:staysBigger}) is proven.
\end{proof}

It remains to define the {\it simplified price of $\xi_n$ under the new dynamics} by
\begin{equation}
\bar{\pi}_{t_n}(\xi_n):=\int_{D_{t_n}}^{D_{t_n+}} (A^0_{t_n}+x)f(x)dx = A^0_{t_n}\xi_n+\int_{D_{t_n}}^{D_{t_n+}} xf(x)dx.
\end{equation}
Observe that
\begin{equation}\label{eq:barpi}
\bar{\pi}_{t_n}(\xi_n) \leq \pi_{t_n}(\xi_n)
\end{equation}
for all admissible strategies $\xi$ because of Lemma \ref{lem:inBet}. In particular, if $\xi$ consists of buy orders only, we have equality.

We show in the next two sections that, the strategies given in the Theorems \ref{thm:weiss1} and \ref{thm:weiss2}, $\xi^{(1)}$ and $\xi^{(2)}$, are the unique minimizers of the {\it price functional}
\begin{equation}
\bar{\mathscr{C}}(\xi):=\mathbb{E}\left[\sum_{n=0}^N \bar{\pi}_{t_n}(\xi_n)\right]
\end{equation}
for the corresponding version of the model. As $\xi^{(1)}$ and $\xi^{(2)}$ consist of buy orders only, (\ref{eq:barpi}) and the remark afterwards imply that these strategies are also the minimizers of the original price functional $\mathscr{C}$.

We turn to the reduction of $\hat{\Xi}$ to deterministic strategies. Let us define the {\it remaining trading volume} $X=(X_t)_{t\in [0,T]}$ by
\begin{equation}\label{eq:defTradingVol}
X_t :=\left\{\begin{array}{ll}
X_0-\sum_{t_n<t}\xi_n&\textrm{for }t\leq T\\
0&\textrm{for }t>T  
\end{array}\right. .
\end{equation}
Furthermore, we set $X_{t_{N+1}}:=0$. We can transform the price of a strategy $\xi\in\hat{\Xi}$ by
\begin{equation}\label{eq:priceTransform}
\sum_{n=0}^N \bar{\pi}_{t_n}(\xi_n) = \sum_{n=0}^N A^0_{t_n}\xi_n+\sum_{n=0}^N \int_{D_{t_n}}^{D_{t_n+}}xf(x)dx,
\end{equation}
and use definition (\ref{eq:defTradingVol}) as well as {\it integration by parts} to rewrite the first term on the right hand side:
\begin{equation}\label{eq:priceTransform2}
\sum_{n=0}^N A^0_{t_n}\xi_n = -\sum_{n=0}^N A^0_{t_n}\left(X_{t_{n+1}}-X_{t_n}\right)=X_0A_0+\sum_{n=0}^N X_{t_n}\left(A^0_{t_n}-A^0_{t_{n+1}}\right).
\end{equation}
Since $\xi$ is admissible, $X$ is a bounded process and $X_{t_n}$ is $\mathscr{F}_{t_{n-1}}$-measurable, $A^0$ is a martingale. Thus, the expectation of (\ref{eq:priceTransform2}) must be $X_0A_0$. The second term on the right hand side of (\ref{eq:priceTransform}) is deterministic for a given realization of a strategy $\xi(\omega)$. We denote this term by
\begin{equation}
\begin{array}{llll}
C^{(i)}(\xi):&\mathbb{R}^{N+1}&\to&\mathbb{R}\\
&\xi&\mapsto&\sum_{n=0}^N \int_{D_{t_n}}^{D_{t_n+}}xf(x)dx
\end{array}
\end{equation}
for Version $i$, $i\in\{1,2\}$. Now, we can express $\bar{\mathscr{C}}$ by
\begin{equation}
\bar{\mathscr{C}}(\xi) = A_0X_0+\mathbb{E}(C^{(i)}(\xi)).
\end{equation}
We spend the next two section to show $C^{(i)}$ has a unique minimizer in the set
\begin{equation}
\Xi := \left\{x:=(x_0,\dots,x_N)\in\mathbb{R}^{N+1}:\sum_{n=0}^N x_n = X_0\right\}
\end{equation}
and this minimizer is determined by the formula given in Theorem \ref{thm:weiss1} or \ref{thm:weiss2} respectively.\\
For the sake of convenience, we introduce some more notation:
\begin{equation}
\bar{a}_x := \exp(-\tau\bar{\rho}(x))\textrm{ for }x\in\mathbb{R},
\end{equation}
\begin{equation}
a_n:=\left\{\begin{array}{ll}
\exp(-\tau\bar{\rho}(E_{t_n+}))&\textrm{in Section \ref{sec:proof1}}\\
\exp(-\tau\bar{\rho}(D_{t_n+}))&\textrm{in Section \ref{sec:proof2}}
\end{array}\right.
\textrm{ for }n\in\{0,\dots,N\}.
\end{equation}
Because the range of $\bar{\rho}$ is assumed to be $[k,K]$, $0<k<K<\infty$, by (\ref{eq:rhoAss1}),
\begin{equation}
\begin{array}{ccc}
e^{-\tau K}\leq \bar{a}_x\leq e^{-\tau k}&\textrm{and}&e^{-\tau K}\leq a_n\leq e^{-\tau k}.
\end{array}
\end{equation}
Additionally, we will need these functions:
\begin{equation}
\begin{array}{ccc}
\tilde{F}(x) := \int_0^x xf(x)dx&\textrm{and}&G(x):=\tilde{F}(F^{-1}(x)).
\end{array}
\end{equation}
Observe that
\begin{equation}
G'(x)=\tilde{F}'(F^{-1}(x))(F^{-1})'(x)=F^{-1}f(F^{-1}(x))\frac{1}{f(F^{-1}(x))}=F^{-1}(x),
\end{equation}
and thus, G is twice continuously differentiable, non-negative, convex and has a fixed point in $0$.
\subsection{The optimal strategy for Version 1}\label{sec:proof1}
In this section, we calculate the unique minimizer of $C^{(1)}$ in $\Xi$. For any $\xi=(x_0,\dots,x_N)\in\Xi$, we have
\begin{eqnarray}\label{eq:costTransform}
C^{(1)}(\xi)&=&\sum_{n=0}^N\int_{D_{t_n}}^{D_{t_n+}}xf(x)dx\\
&=&\sum_{n=0}^N\left[\tilde{F}(F^{-1}(E_{t_n+}))-\tilde{F}(F^{-1}(E_{t_n}))\right]\nonumber\\
&=&\sum_{n=0}^N\left[G(E_{t_n}+x_n)-G(E_{t_n})\right]\nonumber
\end{eqnarray}
\begin{lemma}\label{lem:minExists1}
The function $C^{(1)}$ has at least one local minimum in $\Xi$.
\end{lemma}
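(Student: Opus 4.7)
The plan is to establish the local minimum by showing that $C^{(1)}$ is continuous and coercive on the closed affine subspace $\Xi\cong\mathbb{R}^N$, so that a global (hence local) minimum exists by a standard sublevel-set compactness argument. Continuity is immediate: $E_{t_n}(\xi)$ and $E_{t_n+}(\xi)$ depend continuously on $\xi$ through an iterated composition of $F$, $F^{-1}$, $\bar\rho$ and the exponential, and $G$ is continuous. The substantive work is the coercivity bound.

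For coercivity I would start from the representation (\ref{eq:costTransform}) and telescope using $E_{t_n}=a_{n-1}E_{t_{n-1}+}$ for $n\geq1$ together with $E_{t_0}=0$ and $G(0)=0$, which rewrites
\begin{equation*}
C^{(1)}(\xi)=G(E_{t_N+})+\sum_{n=0}^{N-1}\bigl[G(E_{t_n+})-G(a_n E_{t_n+})\bigr].
\end{equation*}
Since $G$ is convex with $G(0)=0$ and each $a_n\in(0,1)$, the convexity inequality gives $G(a_n y)\leq a_n G(y)$ for any $y\in\mathbb{R}$. Combined with the uniform bound $a_n\leq e^{-k\tau}<1$ supplied by assumption (\ref{eq:rhoAss1}) and the non-negativity of $G$, this yields
\begin{equation*}
C^{(1)}(\xi)\geq(1-e^{-k\tau})\sum_{n=0}^{N}G(E_{t_n+}),
\end{equation*}
which incidentally also shows $C^{(1)}\geq0$.

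To upgrade this lower bound to coercivity, I would relate $\|\xi\|$ to $\max_n|E_{t_n+}|$. From $x_n=E_{t_n+}-a_{n-1}E_{t_{n-1}+}$ (with $x_0=E_{t_0+}$) one gets $|x_n|\leq|E_{t_n+}|+|E_{t_{n-1}+}|$, so $\|\xi\|_\infty\leq 2\max_n|E_{t_n+}|$; equivalently, $\|\xi\|\to\infty$ forces $\max_n|E_{t_n+}|\to\infty$. Assumption (\ref{eq:infVol}), inherited by the GAFS model, gives $F^{-1}(y)\to\pm\infty$ as $y\to\pm\infty$, and since $G'=F^{-1}$ with $G'(0)=0$, the convex function $G$ grows superlinearly at infinity. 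Hence $G(E_{t_n+})\to\infty$ for the maximizing index, and substituting back yields $C^{(1)}(\xi)\to\infty$ as $\|\xi\|\to\infty$. Coercivity plus continuity then gives a global minimizer on $\Xi$.

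The main thing to watch for is that $a_n=e^{-\bar\rho(E_{t_n+})\tau}$ depends nonlinearly on the coordinates of $\xi$, so one cannot view $\xi\mapsto(E_{t_n+})$ as a linear map. The argument above sidesteps that issue by using only the two-sided uniform bounds on $a_n$ from (\ref{eq:rhoAss1}); the nonlinearity will matter in the subsequent uniqueness step (where conditions such as (\ref{eq:rhoAss2}) and (\ref{eq:rhoCond1}) will be needed), but not for this existence step.
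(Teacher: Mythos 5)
Your proof is correct, and it follows the paper's overall strategy --- continuity plus coercivity of $C^{(1)}$ on $\Xi$, starting from the same telescoped representation (\ref{eq:costTransform}) --- but the two key estimates differ in a genuine way. Where you use the chord inequality $G(a_n y)\le a_n G(y)$ (valid since $G$ is convex with $G(0)=0$ and $a_n\in(0,1)$) to get the clean bound $C^{(1)}(\xi)\ge(1-e^{-k\tau})\sum_n G(E_{t_n+})$, the paper uses the tangent-line inequality $G(x)-G(cx)\ge(1-c)|F^{-1}(cx)|\,|x|$ and therefore has to introduce an auxiliary function $H(x)=\min\left(G(x),|F^{-1}(ax)|\,|x|\right)$ and verify separately that $|F^{-1}(ax)|\,|x|\to\infty$; your route avoids this entirely and needs only the lower bound $\bar\rho\ge k$ from (\ref{eq:rhoAss1}), not the upper bound $K$. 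Likewise, your direct estimate $\|\xi\|_\infty\le 2\max_n|E_{t_n+}|$, read off from $x_n=E_{t_n+}-a_{n-1}E_{t_{n-1}+}$, replaces the paper's argument via the ``linear'' map $T$ --- which is not actually linear, since the $a_n$ depend on $\xi$ --- and is both shorter and more careful about that nonlinearity, which you correctly flag. The only step worth spelling out is that $\inf_{|y|\ge R}G(y)\to\infty$ as $R\to\infty$: this holds because $G$ is convex with $G(0)=0$ and $G'=F^{-1}$ unbounded in both directions by (\ref{eq:infVol}), so $G$ is monotone away from the origin and tends to infinity on both sides; with that, your coercivity bound closes the argument.
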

\begin{proof}
The statement will follow from
\begin{equation}
C^{(1)}(\xi)\to\infty\textrm{ for }||\xi||_\infty\to\infty
\end{equation}
because $C^{(1)}$ is continuous. First, we use the properties of $G$ to find a lower bound for $G(x)-G(cx)$, $x\in\mathbb{R}$ and $c\in [0,1]$:
\begin{eqnarray}\label{eq:gDiff}
G(x)-G(cx)&\geq&G(cx)+(x-cx)G'(cx)-G(cx)\\ 
&=&(1-c)|F^{-1}(cx)||x|.\nonumber
\end{eqnarray}
The inequality (\ref{eq:gDiff}) applied to (\ref{eq:costTransform}) leads to a lower bound for $C^{(1)}$:
\begin{eqnarray}
C^{(1)}(\xi)&=&G(E_{t_N}+x_N)-G(E_{t_0})+\sum_{n=0}^{N-1}\left[G(E_{t_n}+x_n)-G(E_{t_{n+1}})\right]\nonumber\\
&=&G\left(\left(\Pi_{n=0}^N a_n\right)x_0+\dots+a_{N-1}x_{N-1}+x_N\right)-G(0)\nonumber\\
&&+\sum_{n=0}^{N-1}\Big[G\left(\left(\Pi_{m=0}^{n-1} a_m\right)x_0+\dots+a_{n_1}x_{n-1}+x_n\right)\\
&&\phantom{+\sum_{n=0}^{N-1}\Big[}-G\left(a_n\left[\left(\Pi_{m=0}^{n-1} a_m\right)x_0+\dots+a_{n_1}x_{n-1}+x_n\right]\right)\Big]\nonumber\\
&\geq&G\left(\left(\Pi_{n=0}^N a_n\right)x_0+\dots+a_{N-1}x_{N-1}+x_N\right)-G(0)\nonumber\\
&&+\sum_{n=0}^{N-1}\Big[(1-a_n)\left|F^{-1}\left(a_n\left[\left(\Pi_{m=0}^{n-1} a_m\right)x_0+\dots+a_{n_1}x_{n-1}+x_n\right]\right)\right|\nonumber\\
&&\phantom{+\sum_{n=0}^{N-1}\Big[(1-a_n)}\left|a_n\left[\left(\Pi_{m=0}^{n-1} a_m\right)x_0+\dots+a_{n_1}x_{n-1}+x_n\right]\right|\Big].\nonumber
\end{eqnarray}
We define a linear mapping $T:\mathbb{R}^{N+1}\to\mathbb{R}^{N+1}$ by
\begin{equation}
T(\xi) := \left(x_0, a_0x_0+x_1,\dots,\left[\Pi_{n=0}^{N-1}a_n\right]x_0+\dots+a_{N-1}x_{N-1}+x_N\right),
\end{equation}
and the smallest $a_n$ by
\begin{equation}
a:=\min\{a_n:n\in\{0,\dots,N\}\}.
\end{equation}
Observe that
\begin{equation}
\left|\left|T(\xi)\right|\right|_\infty\geq \left|\left|\left(x_0, ax_0+x_1,\dots,a^nx_0+\dots+ax_{N-1}+x_N\right)\right|\right|_\infty\to\infty
\end{equation}
for $||\xi||_\infty\to\infty$ as well as $G(x)\to\infty$ and $|F^{-1}(a x)||x|\to\infty$ for $|x|\to\infty$. The last statement follows, because $F$ is unbounded. Finally, we define
\begin{equation}
H(x) := \min\left(G(x),|F^{-1}(a x)||x|\right).
\end{equation}
Also $H(x)\to\infty$ for $x\to\infty$, and consequently,
\begin{equation}
C^{(1)}(\xi) \geq H(||T(\xi)||_\infty)-G(0)\to\infty.
\end{equation}
\end{proof}

One has to determine $\xi_0^{(1)}$ by solving
\begin{equation}
F^{-1}\left(X_0-N\xi^{(1)}_0(1-a_0)\right)=h_1(\xi^{(1)}_0)
\end{equation}
in Theorem \ref{thm:weiss1}. We define the function
\begin{equation}
\hat{h}_1(x):=h_1(x)-F^{-1}\left(X_0-N(1-\bar{a}_x)x\right)
\end{equation}
for which $\xi_0^{(1)}$ is a zero.
\begin{lemma}\label{lem:zero}
Given that the assumptions of Theorem \ref{thm:weiss1} hold, function $\hat{h}_1$ has at most one zero, which is positive if it exists.
\end{lemma}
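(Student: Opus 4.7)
The plan is to write $\hat{h}_1(x) = h_1(x) - g(x)$ with $g(x) := F^{-1}\bigl(X_0 - N(1-\bar{a}_x)x\bigr)$, to prove that $h_1$ is strictly increasing while $g$ is strictly decreasing, and to conclude that $\hat{h}_1$ is strictly increasing. This immediately yields the uniqueness part of the claim. Positivity of any zero then follows from evaluating $\hat{h}_1$ at $0$.

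First, for the monotonicity of $g$, I would differentiate the map $x \mapsto (1-\bar{a}_x)x$, using $\bar{a}_x' = -\tau\bar{\rho}'(x)\bar{a}_x$, to obtain $1 - \bar{a}_x\bigl(1 - \tau\bar{\rho}'(x)x\bigr)$, which is strictly positive by the standing assumption (\ref{eq:rhoCond1}). Combined with the fact that $F^{-1}$ is strictly increasing (since $f > 0$), this shows that $g$ is strictly decreasing on all of $\mathbb{R}$.

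Second, for the monotonicity of $h_1$, the one-to-one hypothesis and continuity of $h_1$ (guaranteed by the twice-differentiability of $\bar{\rho}$ and the non-vanishing denominator coming from (\ref{eq:rhoCond1})) force $h_1$ to be strictly monotone, so only the direction has to be pinned down. A direct evaluation gives $h_1(0) = 0$ (using $F(0) = 0$ so $F^{-1}(0) = 0$). For $x > 0$, since $\bar{a}_x \in (0,1)$ one has $F^{-1}(\bar{a}_x x) < F^{-1}(x)$, and multiplying by the factor $\bar{a}_x(1-\tau\bar{\rho}'(x)x) \in (0,1)$, positive by (\ref{eq:rhoAss2}) and bounded above by $1$ via (\ref{eq:rhoCond1}), keeps the product strictly below $F^{-1}(x)$; the denominator is positive, so $h_1(x) > 0$. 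The case $x < 0$ is symmetric: both $F^{-1}(\bar{a}_x x)$ and $F^{-1}(x)$ are negative with $F^{-1}(\bar{a}_x x) > F^{-1}(x)$, and scaling the former by a number in $(0,1)$ pushes it even closer to $0$, so again it exceeds $F^{-1}(x)$ and the numerator is negative. Hence $\mathrm{sgn}\,h_1(x) = \mathrm{sgn}\,x$, and combined with injectivity this forces $h_1$ to be strictly increasing.

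Putting the two steps together, $\hat{h}_1$ is strictly increasing, so it admits at most one zero. Finally, $\hat{h}_1(0) = h_1(0) - F^{-1}(X_0) = -F^{-1}(X_0) < 0$ because $X_0 > 0$ and $F^{-1}$ is strictly increasing with $F^{-1}(0) = 0$; strict monotonicity of $\hat{h}_1$ then places any zero in $(0,\infty)$. The main obstacle I anticipate is precisely the sign step for $h_1$: the one-to-one hypothesis alone is silent about the direction of monotonicity, and a direct derivative calculation of $h_1'$ is unpleasant because of the $\bar{\rho}'$ terms, so the clean route is to extract the direction from the structural inequalities (\ref{eq:rhoAss2}) and (\ref{eq:rhoCond1}) as above.
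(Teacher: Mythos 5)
Your proof is correct and follows essentially the same route as the paper: decompose $\hat{h}_1$ into $h_1$ minus the $F^{-1}$ term, show the latter is strictly decreasing via the derivative $1-\bar{a}_x(1-\tau\bar{\rho}'(x)x)>0$ from (\ref{eq:rhoCond1}), deduce that $h_1$ is strictly increasing from injectivity plus its sign behaviour, and locate the zero via $\hat{h}_1(0)=-F^{-1}(X_0)<0$. If anything, you are more careful than the paper, which merely asserts that $h_1$ is ``positive for positive arguments'' where you actually verify $\mathrm{sgn}\,h_1(x)=\mathrm{sgn}\,x$ from (\ref{eq:rhoAss2}) and (\ref{eq:rhoCond1}).
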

\begin{proof}
For the existence of at most one zero, it is sufficient to show that $\hat{h}_1$ is strictly increasing. The function $h_1$ has a fixed point in $0$, is positive for positive arguments and continuous as well as bijective, thus it must be strictly increasing or, equivalently, its slope must be strictly positive. Consequently, the slope of $\hat{h}_1$ is also positive, because
\begin{eqnarray}
\hat{h}_1'(x)&=&h_1'(x)-\frac{d}{dx}\left[F^{-1}\left(X_0-Nx(1-\bar{a}_x)\right)\right]\\
&=&h_1'(x)+N\frac{1-\bar{a}_x\left(1-\tau\bar{\rho} '(x)x\right)}{f(F^{-1}(X_0-Nx(1-\bar{a}_x)))},
\end{eqnarray}
and the numerator of the second term is positive by assumption (\ref{eq:rhoCond1}). The positivity of the zero (if existing) follows simply from
\begin{equation}
\hat{h}_1(0) = -F^{-1}(X_0)<0.
\end{equation}
\end{proof}

Now, we are prepared to prove Theorem \ref{thm:weiss1}.
\begin{lemma}
Strategy $\xi^{(1)}$ is the unique minimizer of function $C^{(1)}$ and all components of $\xi^{(1)}$ are positive.
\end{lemma}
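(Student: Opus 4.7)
The plan is to argue uniqueness by first-order analysis, since Lemma \ref{lem:minExists1} already guarantees that $C^{(1)}$ attains a global minimum on the affine set $\Xi$. I would change variables from the trade sizes $(x_0,\dots,x_N)$ to the post-trade volume impacts $y_n := E_{t_n+}$, because in the original coordinates each $x_n$ affects all later $E_{t_m+}$ through the nonlinear map $y\mapsto e^{-\tau\bar{\rho}(y)}y$, whereas in $y$-coordinates each Lagrange equation couples only two consecutive variables. Writing $\phi(y):=e^{-\tau\bar{\rho}(y)}y=\bar{a}_y y$, the dynamics gives $E_{t_{n+1}}=\phi(y_n)$, the trades are $x_n=y_n-\phi(y_{n-1})$ (with the convention $\phi(y_{-1}):=0$), the constraint $\sum x_n=X_0$ becomes $y_0+\sum_{n=1}^N(y_n-\phi(y_{n-1}))=X_0$, and by (\ref{eq:costTransform})
\begin{equation*}
C^{(1)}=\sum_{n=0}^N\bigl[G(y_n)-G(\phi(y_{n-1}))\bigr].
\end{equation*}

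Next I would apply the Lagrange multiplier rule. Using $G'=F^{-1}$ and $\phi'(y)=\bar{a}_y(1-\tau\bar{\rho}'(y)y)$, the stationarity conditions read, for $0\le n\le N-1$,
\begin{equation*}
F^{-1}(y_n)-\phi'(y_n)F^{-1}(\phi(y_n))=\lambda\bigl(1-\phi'(y_n)\bigr),
\end{equation*}
which, dividing by $1-\phi'(y_n)>0$ (this is exactly (\ref{eq:rhoCond1})), is precisely $h_1(y_n)=\lambda$; and for $n=N$ the condition is $F^{-1}(y_N)=\lambda$. Since $h_1$ is assumed one-to-one, $y_0=\dots=y_{N-1}=:y^*$, and the constraint then forces $y_N=X_0-Ny^*(1-\bar{a}_{y^*})$. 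Substituting into $F^{-1}(y_N)=h_1(y^*)$ shows that $y^*$ is a zero of $\hat{h}_1$, which by Lemma \ref{lem:zero} is unique and positive. Translating back gives $\xi^{(1)}_0=y^*$, $\xi^{(1)}_n=y^*(1-\bar{a}_{y^*})$ for $1\le n\le N-1$, and $\xi^{(1)}_N=y_N-\phi(y^*)$, matching Theorem \ref{thm:weiss1}. Combined with the existence of a minimum, this shows $\xi^{(1)}$ is the unique minimizer of $C^{(1)}$.

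For the positivity claim, $\xi^{(1)}_0=y^*>0$ by Lemma \ref{lem:zero}, the intermediate orders are positive since $\bar{a}_{y^*}\in(0,1)$, and for the last order I would rewrite the identity $h_1(y^*)=F^{-1}(y_N)$ as
\begin{equation*}
F^{-1}(y_N)-F^{-1}(\phi(y^*))=\frac{F^{-1}(y^*)-F^{-1}(\phi(y^*))}{1-\phi'(y^*)},
\end{equation*}
which is strictly positive because $y^*>\phi(y^*)$, $F^{-1}$ is strictly increasing, and $1-\phi'(y^*)>0$; hence $y_N>\phi(y^*)$, i.e.\ $\xi^{(1)}_N>0$. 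The main delicate point I expect is keeping the Lagrange bookkeeping clean in the presence of the $E_{t_n+}$-dependence of $\bar{\rho}$; the $y$-parametrization and the identity $\phi'(y)=\bar{a}_y(1-\tau\bar{\rho}'(y)y)$ are what make the generalized expression $h_1$ emerge naturally from the first-order conditions rather than by ad hoc algebraic manipulation.
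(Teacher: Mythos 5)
Your proof is correct and rests on the same three pillars as the paper's: existence of a minimizer from Lemma \ref{lem:minExists1}, the Lagrange multiplier rule to reduce the first-order conditions to $h_1(\cdot)=\lambda$, and Lemma \ref{lem:zero} to pin down the unique positive zero of $\hat{h}_1$. Where you genuinely diverge is in the mechanics of the stationarity conditions: the paper stays in the trade coordinates $(x_0,\dots,x_N)$ and must derive a backward recursion for $\partial C^{(1)}/\partial x_n$ (each $x_n$ propagates into all later $E_{t_m+}$), whereas your substitution $y_n=E_{t_n+}$, $\phi(y)=\bar{a}_y y$ turns the objective into $\sum_n[G(y_n)-G(\phi(y_{n-1}))]$ and the constraint into $\sum_n y_n-\sum_{n<N}\phi(y_n)=X_0$, so each Lagrange equation involves only $y_n$ and the identity $\phi'(y)=\bar{a}_y(1-\tau\bar{\rho}'(y)y)$ produces $h_1$ without the recursion; the division by $1-\phi'(y_n)$ is exactly where (\ref{eq:rhoCond1}) enters, just as it guarantees positivity of the relevant prefactor in the paper. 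You also determine $y^*$ directly from the $n=N$ multiplier equation $F^{-1}(y_N)=\lambda=h_1(y^*)$, while the paper instead substitutes the one-parameter family back into (\ref{eq:costTransform}) to form $C^{(1)}_0(x_0)$ and identifies its derivative as $N(1-\phi'(x))\hat{h}_1(x)$ --- equivalent, but an extra step. Your positivity argument for the final order, rewriting $h_1(y^*)=F^{-1}(y_N)$ as $F^{-1}(y_N)-F^{-1}(\phi(y^*))=\bigl(F^{-1}(y^*)-F^{-1}(\phi(y^*))\bigr)/\bigl(1-\phi'(y^*)\bigr)>0$, is a cleaner version of the paper's inequality and reaches the same conclusion $y_N>\phi(y^*)$. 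In short: same skeleton and same key lemmas, but your parametrization buys decoupled first-order conditions and a more transparent emergence of $h_1$; the paper's route keeps the original variables at the cost of a chain-rule recursion.
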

\begin{proof}
We showed in Lemma \ref{lem:minExists1} that there is an optimal strategy $\xi^*=(x_0^*,\dots,x_N^*)\in\Xi$. Thus, there must be a Lagrange multiplier $\nu\in\mathbb{R}$ such that
\begin{equation}
\frac{\partial}{\partial x^*_n}C^{(1)}(\xi^*)=\nu\textrm{\ \ \ for }n\in\{0,\dots,N\}.
\end{equation}
Using representation (\ref{eq:costTransform}) of $C^{(1)}$, one gets
\begin{equation}
\frac{\partial}{\partial x_n} C^{(1)}(x) = F^{-1}(E_{t_n+})+a_n\left(1-\bar{\rho} '(E_{t_n+})E_{t_n+}\right)\left[\frac{\partial}{\partial x_{n+1}} C^{(1)}(x)-F^{-1}(E_{t_{n+1}})\right]
\end{equation}
for $n\in\{0,\dots,N-1\}$. In combination with the Langrange multiplier, the recursive formula yields
\begin{equation}
F^{-1}(E_{t_n+})+a_n\left(1-\bar{\rho} '(E_{t_n+})E_{t_n+}\right)\left[\nu-F^{-1}(E_{t_{n+1}})\right]=\nu
\end{equation}
\begin{equation}
\Leftrightarrow \nu = \frac{F^{-1}(E_{t_n+})-a_n\left(1-\bar{\rho} '(E_{t_n+})E_{t_n+}\right)F^{-1}(a_n E_{t_n+})}{1-a_n\left(1-\bar{\rho} '(E_{t_n+})E_{t_n+}\right)}=h_1(E_{t_n+})
\end{equation}
for $n\in\{0,\dots,N-1\}$. The function $h_1$ is bijective by assumption, and thus,
\begin{eqnarray}
x^*_0&=&h_1^{-1}(\nu)\\
x^*_n&=&(1-a_0)x^*_0\textrm{ for }n\in\{1,\dots,N-1\}\label{eq:xnFromx0}\\
x^*_N&=&X_0-x^*_0-(N-1)x^*_0(1-a_0).
\end{eqnarray}
Therefore, the optimal strategy $\xi^*$ is completely defined if we can determine $x^*_0$. By (\ref{eq:costTransform}),
\begin{eqnarray}
C^{(1)}(x^*)&=&G(x^*_0)-G(0)+(N-1)\left[G(a_0x^*_0+(1-a_0)x^*_0)-G(a_0x^*_0)\right]\nonumber\\
&&+G(a_0x^*_0+X_0-x^*_0-(N-1)(1-a_0)x^*_0)-G(a_0x^*_0)\nonumber\\
&=&N\left[G(x_0^*)-G(a_0x^*_0)\right]+G(X_0-N(1-a_0)x_0^*)-G(0)\\
&=:&C_0^{(1)}(x^*_0).\nonumber
\end{eqnarray}
We know that $C^{(1)}_0$ has a minimum because of Lemma \ref{lem:minExists1}. We can find it by differentiation:
\begin{eqnarray}
&&\frac{d}{dx}C_0^{(1)}(x)\label{eq:diffC0}\\
&=&N\big[F^{-1}(x)-\bar{a}_x[1-\tau\bar{\rho}'(x)x]F^{-1}(\bar{a}_xx)\nonumber\\
&&\phantom{N\big[}-\left(1-\bar{a}_x[1-\tau\bar{\rho}'(x)x]\right)F^{-1}(X_0-N(1-\bar{a}_x)x)\big]\nonumber\\
&=&N\left(1-\bar{a}_x[1-\tau\bar{\rho}'(x)x]\right)\hat{h}_1(x).\nonumber
\end{eqnarray}
Assumption (\ref{eq:rhoCond1}) and Lemma \ref{lem:zero} tell us $C^{(1)}$ has exactly one minimum, and this minimum is positive. We have established the uniqueness and representation of the optimal strategy.

It remains to show that all components of $x^*$ are positive. We already know that $x^*_0 >0$. The positivity of $x_n^*$ follows from (\ref{eq:xnFromx0}) for all $n\in\{1,\dots,N-1\}$. For the last order, $x_N^*$, observe that (\ref{eq:diffC0}) vanishes in $x^*_0$. Furthermore, $F^{-1}$ is strictly increasing, and thus,
\begin{eqnarray}
0&=&F^{-1}(x^*_0)-a_0(1-\tau\bar{\rho}'(x^*_0)x^*_0)F^{-1}(a_0x^*_0)\nonumber\\
&&-\left[1-a_0(1-\tau\bar{\rho}'(x^*_0)x^*_0)\right]F^{-1}(\underbrace{X_0-N(1-a_0)x^*_0}_{=x^*_N+a_0x^*_0})\\
&>&\left[1-a_0(1-\tau\bar{\rho}'(x^*_0)x^*_0)\right]\left[F^{-1}(a_0x^*_0)-F^{-1}(a_0x^*_0+x^*_N)\right],
\end{eqnarray}
which, indeed, implies the positivity of $x^*_N$.
\end{proof}

\subsection{The optimal strategy for Version 2}\label{sec:proof2}
In this section, we determine the unique minimizer of $C^{(2)}$ in $\Xi$. For $\xi=(x_0,\dots,x_N)\in\Xi$, we have
\begin{eqnarray}\label{eq:costTransform2}
C^{(2)}(\xi)&=&\sum_{n=0}^N\int_{D_{t_n}}^{D_{t_n+}}xf(x)dx\\
&=&\sum_{n=0}^N\left(G(x_n+F(D_{t_n}))-\tilde{F}(D_{t_n})\right)\nonumber
\end{eqnarray}
\begin{lemma}\label{lem:minExists2}
The function $C^{(2)}$ has a local minimum in $\Xi$.
\end{lemma}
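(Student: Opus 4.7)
The plan is to show that $C^{(2)}$ is continuous and coercive on $\Xi$, which implies the existence of a minimizer. Continuity is immediate from the continuity of $F$, $F^{-1}$ and $\bar{\rho}$, so the heart of the proof is coercivity: $C^{(2)}(\xi)\to\infty$ whenever $\|\xi\|_\infty\to\infty$. The key idea is to reparametrize strategies in the ``natural'' coordinates of Version 2, namely the post-trade price impacts $u_n:=D_{t_n+}$, for $n=0,\ldots,N$. Setting $u_{-1}:=0$ and $a_{-1}:=0$, the relation $x_n = F(u_n)-F(a_{n-1}u_{n-1})$ sets up a continuous bijection between $\xi\in\mathbb{R}^{N+1}$ and $u\in\mathbb{R}^{N+1}$, and in particular $\|\xi\|_\infty\to\infty$ forces $\|u\|_\infty\to\infty$.

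Next, I would substitute these coordinates into the representation (\ref{eq:costTransform2}). Using $G\circ F=\tilde F$ one finds $G(x_n+F(D_{t_n}))=\tilde F(u_n)$ and $\tilde F(D_{t_n})=\tilde F(a_{n-1}u_{n-1})$, and the sum telescopes to
\begin{equation*}
C^{(2)}(\xi)\;=\;\tilde F(u_N)+\sum_{n=0}^{N-1}\bigl[\tilde F(u_n)-\tilde F(a_n u_n)\bigr].
\end{equation*}
Because $\tilde F'(y)=yf(y)$ has the sign of $y$, $\tilde F$ is nonnegative with unique minimum at $0$; and because $|a_n u_n|\le e^{-\tau k}|u_n|<|u_n|$ with $a_n u_n$ having the same sign as $u_n$, every term in this telescoped sum is nonnegative. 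Each difference admits the crude lower bound
\begin{equation*}
\tilde F(u_n)-\tilde F(a_n u_n)\;\ge\;(1-e^{-\tau k})e^{-\tau K}\,u_n^2\!\!\!\inf_{y\in[e^{-\tau\bar\rho(u_n)}u_n,\,u_n]}\!\! f(y),
\end{equation*}
obtained by estimating the integrand $zf(z)$ from below on the interval of integration by $a_n u_n\cdot\inf f$.

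At this point assumption (\ref{eq:fCond2}) enters decisively: it guarantees that the right-hand side above tends to $\infty$ as $|u_n|\to\infty$, and the same assumption implies $\tilde F(u_N)\to\infty$ as $|u_N|\to\infty$ (since $u_N^2\inf f$ dominates a constant). Combining these, if $\|u\|_\infty\to\infty$ then at least one term in the telescoped expression blows up while all the others remain nonnegative, so $C^{(2)}(\xi)\to\infty$. Coercivity plus continuity on the closed set $\Xi$ then yields the desired local (in fact global) minimum. The main obstacle I expect is the second paragraph: setting up the right change of variables and choosing a lower bound for $\tilde F(u_n)-\tilde F(a_n u_n)$ that matches exactly the form of assumption (\ref{eq:fCond2}); once this is arranged correctly, the rest is routine.
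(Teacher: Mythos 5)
Your proposal is correct and follows essentially the same route as the paper: the paper also telescopes $C^{(2)}$ into the nonnegative increments $\tilde F(D_{t_n+})-\tilde F(a_nD_{t_n+})$, bounds each below by a constant times $D_{t_n+}^2\inf_{y\in[\bar a_xx,x]}f(y)$ so that (\ref{eq:fCond2}) forces blow-up, and shows via a coordinatewise sequence argument (its map $T$, which is just $F$ applied to your coordinates $u_n$) that $\|\xi\|_\infty\to\infty$ forces the impacts to be unbounded. The only differences are cosmetic: your explicit reparametrization by $u_n=D_{t_n+}$ versus the paper's $T$, and a slightly cruder constant $e^{-\tau K}(1-e^{-\tau k})$ in place of the paper's $\tfrac12(1-\bar a_x^2)$.
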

\begin{proof}
Again, it suffices to show
\begin{equation}
C^{(2)}(\xi) \to\infty\textrm{ for }||\xi||_\infty\to\infty.
\end{equation}
We rearrange (\ref{eq:costTransform2}) and get
\begin{eqnarray}\label{eq:costTransform2b}
&&C^{(2)}(\xi)\\
&=&\sum_{n=0}^N\left(\tilde{F}(F^{-1}(x_n+F(D_{t_n})))-\tilde{F}(D_{t_n})\right)\nonumber\\
&=&\tilde{F}(a_NF^{-1}(x_N+F(D_{t_N})))\nonumber\\
&&+\sum_{n=0}^N\left(\tilde{F}(F^{-1}(x_n+F(D_{t_n})))-\tilde{F}(a_nF^{-1}(x_n+F(D_{t_n})))\right)\nonumber\\
&\geq&\sum_{n=0}^N\left(\tilde{F}(F^{-1}(x_n+F(D_{t_n})))-\tilde{F}(a_nF^{-1}(x_n+F(D_{t_n})))\right)\nonumber
\end{eqnarray}
A lower bound for the last line of (\ref{eq:costTransform2b}) is given by
\begin{eqnarray}
\tilde{F}(x)-\tilde{F}(\bar{a}_xx)&=&\left|\int_{\bar{a}_xx}^x zf(z)dz\right|\\
&\geq& \inf_{y\in [\bar{a}_xx, x]} f(y)\left|\int_{\bar{a}_xx}^xz dz\right|\nonumber\\
&=&\frac{1}{2}(1-\bar{a}_x^2)x^2\inf_{y\in [\bar{a}_xx, x]} f(y)\ \geq\ 0.\nonumber
\end{eqnarray}
Because of the assumptions (\ref{eq:fCond2}) and (\ref{eq:infVol}), we know
\begin{equation}
H(x) := \frac{1}{2}(1-\bar{a}_{F^{-1}(x)}^2)(F^{-1}(x))^2\inf_{y\in [\bar{a}_{F^{-1}(x)}F^{-1}(x), F^{-1}(x)]} f(y).
\end{equation}
tends to infinity for $|x|\to\infty$. Finally, we introduce the mapping
\begin{equation}
T(x) := (x_0, x_1+F(D_{t_1}),\dots,x_N+F(D_{t_N})),
\end{equation}
for which $C^{(2)}(x)\geq H(||T(x)||_\infty)$ holds. It remains to show that $||T(x)||_\infty \to \infty$ for $|x|\to\infty$. Let us assume there is a sequence $x^k$ such that $||x^k||_\infty\to\infty$ but $||T(x^k)||_\infty$ remains bounded. This implies especially the boundedness of $(x_0^k)$. But then again, $D_{t_1}^k=a_0^kF^{-1}(x_0^k)$ remains bounded. We can continue the argumentation for all coordinates of $T(x)$ and conclude that $(x_n^k)$ is a bounded sequence for all $n\in\{0,\dots,N\}$. This contradicts the assumption, and thus the lemma is proven.
\end{proof}
\begin{lemma}\label{lem:zero2}
Under the assumptions of Theorem \ref{thm:weiss2}, equation (\ref{eq:mainEq2}) has at most one solution, which is positive if existing. Furthermore, $g(x):=f(x)-\bar{a}_xf(\bar{a}_xx)(1-\tau\bar{\rho}'(x)x)$ is positive.
\end{lemma}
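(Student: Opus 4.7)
My plan is first to establish the positivity of $g$, and then to use it to deduce the uniqueness and positivity of the solution of (\ref{eq:mainEq2}) via a monotonicity argument.

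For the positivity of $g$, I begin by rewriting the numerator of $h_2$ algebraically: one checks that $\tilde g(x):=f(x)-\bar{a}_x^{2} f(\bar{a}_x x)(1-\tau\bar{\rho}'(x)x)=\bar{a}_x g(x)+f(x)(1-\bar{a}_x)$, so that $h_2(x)=x\,\tilde g(x)/g(x)$. Evaluating at $x=0$ yields $g(0)=f(0)(1-\bar{a}_0)>0$ by (\ref{eq:rhoAss1}) and the positivity of $f$. Since $f$ is continuous and $\bar{\rho}$ is $C^1$, $g$ is continuous. I then argue by contradiction: if $g$ vanished at some $x_\star\neq 0$, then
\[
\tilde g(x_\star)=\bar{a}_{x_\star}(1-\bar{a}_{x_\star})f(\bar{a}_{x_\star}x_\star)(1-\tau\bar{\rho}'(x_\star)x_\star)>0
\]
by (\ref{eq:rhoAss1}), (\ref{eq:rhoAss2}), and $f>0$, so $h_2$ would be undefined at $x_\star$, contradicting the hypothesis that $h_2$ is a one-to-one function on $\mathbb{R}$. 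Thus $g$ is nowhere zero, and together with $g(0)>0$ and continuity I conclude $g>0$ on $\mathbb{R}$.

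For the uniqueness and positivity of the solution, I substitute $y:=F^{-1}(\xi_0^{(2)})$ and apply $F$ to both sides of (\ref{eq:mainEq2}), reducing it to $\Phi(y)=X_0$, where
\[
\Phi(y):=N\!\left[F(y)-F(\bar{a}_y y)\right]+F(h_2(y)).
\]
Differentiation gives $\Phi'(y)=Ng(y)+f(h_2(y))\,h_2'(y)$. The first term is positive by the first part. For the second, observe that $h_2(x)=x\bar{a}_x+xf(x)(1-\bar{a}_x)/g(x)$ has the same sign as $x$ (since $g>0$, $\bar{a}_x\in(0,1)$, and $f>0$), so the continuous one-to-one function $h_2$, which vanishes at zero, must be strictly increasing; hence $h_2'\geq 0$ and the second term is nonnegative. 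Therefore $\Phi$ is strictly increasing, which gives at most one solution. Finally $\Phi(0)=0<X_0$, so any solution satisfies $y>0$, whence $\xi_0^{(2)}=F(y)>0$.

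The main obstacle is the positivity of $g$: the direct computation only handles $x=0$, and ruling out zeros on $\mathbb{R}\setminus\{0\}$ requires leveraging the hypothesis that $h_2$ is a well-defined function on $\mathbb{R}$ together with the sign information supplied by (\ref{eq:rhoAss1}) and (\ref{eq:rhoAss2}). A more structural alternative would exploit the strict monotonicity of $\phi(x):=x\bar{a}_x$ from Lemma \ref{lem:inBet} to express $g(x)$ as the derivative of $F(x)-F(\phi(x))$ and extract pointwise positivity from the properties of $\phi$, but this seems to require additional control over $\phi'$ that is not immediate from the stated assumptions.
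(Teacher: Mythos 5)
Your argument is correct and follows essentially the same route as the paper: both proofs first deduce positivity of the denominator $g$ of $h_2$ from the requirement that $h_2$ be a well-defined injection (the paper via the sign comparison of $k$ and $k_2$, you via the cleaner contradiction that $g(x_\star)=0$ would force a nonzero numerator and hence an undefined $h_2(x_\star)$), and then obtain uniqueness and positivity of the solution by showing a single combined function of $\xi_0^{(2)}$ is strictly increasing with the wrong sign at $0$. One small repair: since $f$ is only assumed continuous, $h_2$ need not be differentiable, so you should not write $\Phi'(y)=Ng(y)+f(h_2(y))h_2'(y)$; instead note that your own sign argument already shows the continuous injection $h_2$ is strictly increasing, hence $F\circ h_2$ is nondecreasing, and adding it to the strictly increasing term $N[F(y)-F(\bar a_y y)]$ (whose derivative is $Ng(y)>0$) still yields a strictly increasing $\Phi$.
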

\begin{proof}
We show that both $h_2\circ F^{-1}$ and
\begin{equation}
\hat{h}_2(x):=-F^{-1}\left(X_0-N\left[x-F\left(\bar{a}_{F^{-1}(x)}F^{-1}(x)\right)\right]\right)
\end{equation}
are strictly increasing. In this case, at most one zero can exist, and its positivity is guaranteed by $h_2(F^{-1}(0))=0$ and $\hat{h}_2(0) = -F(X_0) < 0$. The function $h_2$ is strictly increasing because it is continuous, bijective, has a fixed point at zero and
\begin{eqnarray}
\lim_{\epsilon\to 0} \frac{h_2(\epsilon)-h_2(0)}{\epsilon}&=&\lim_{\epsilon\to 0}\frac{f(\epsilon)-\bar{a}_\epsilon^2f(\bar{a}_\epsilon \epsilon)(1-\tau\bar{\rho}'(\epsilon)\epsilon)}{f(\epsilon)-\bar{a}_\epsilon f(\bar{a}_\epsilon \epsilon)(1-\tau\bar{\rho}'(\epsilon)\epsilon)}\\
&=&\frac{1-\bar{a}_0^2}{1-\bar{a}_0} > 0.
\end{eqnarray}
Since $F^{-1}$ is also strictly increasing, we have proven the same property for $h_2\circ F^{-1}$. We differentiate $\hat{h}_2$:
\begin{equation}
\hat{h}_2'(x)=N\left[\frac{f(F^{-1}(x))-\bar{a}_{F^{-1}(x)}f(\bar{a}_{F^{-1}(x)}F^{-1}(x))(1-\tau\bar{\rho}'(F^{-1}(x))F^{-1}(x))}{f(F^{-1}(x))f\left(F^{-1}\left(X_0-N\left[x-F(\bar{a}_{F^{-1}(x)}F^{-1}(x))\right]\right)\right)}\right]\label{eq:hDiff2}
\end{equation}
This expression is strictly positive because the numerator is strictly positive as we show next. We define both
\begin{equation}
\begin{array}{ccl}
k(x)&:=&f(x)-\bar{a}_xf(\bar{a}_xx)(1-\tau\bar{\rho}'(x)x)\textrm{ and }\\
k_2(x)&:=&f(x)-\bar{a}^2_xf(\bar{a}_xx)(1-\tau\bar{\rho}'(x)x).
\end{array}
\end{equation}
The numerater of (\ref{eq:hDiff2}) can be expressed by $k(F^{-1}(x))$, and furthermore, $h_2(x)=xk_2(x)/k(x)$. Both functions $k$ and $k_2$ are continuous, and due to the properties of $h$ explained in the beginning of the proof, the functions must have the same sign for all $x\in\mathbb{R}$. The function $k_2$ is greater than $k$ for all $x\in\mathbb{R}$; thus, there can be no change of signs and we have either $k(x)>0$ and $k_2(x)>0$ or $k(x)<0$ and $k_2(x)<0$ for all $x$. Because $k(0)=f(0)(1-\bar{a}_0)>0$, positivity is proven.
\end{proof}
\begin{lemma}\label{lem:diff2}
For all $n\in\{0,\dots,N-1\}$, the partial derivatives of $C^{(1)}$ can be expressed by
\begin{eqnarray*}
\frac{\partial}{\partial x_n} C^{(2)}(x) &=& D_{t_n+}+\frac{a_nf(D_{t_{n+1}})(1-\tau\bar{\rho}'(D_{t_n+})D_{t_n+})}{f(D_{t_n+})}\left[\frac{\partial}{\partial x_{n+1}} C^{(2)}(x)-D_{t_{n+1}}\right].
\end{eqnarray*}
\end{lemma}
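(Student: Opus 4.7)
The plan is to differentiate the representation (\ref{eq:costTransform2}) of $C^{(2)}$ by chain rule, carefully propagating the dependence of $x_n$ through the state variables. The key structural observation is that $x_n$ appears directly only in the $n$-th summand, and influences every later summand exclusively through the state $D_{t_{n+1}}=e^{-\tau\bar{\rho}(D_{t_n+})}D_{t_n+}$; given $D_{t_{n+1}}$, the trajectory $D_{t_{n+2}},D_{t_{n+3}},\dots$ is determined by the remaining orders $x_{n+1},\dots,x_N$ alone.

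First I would dispatch the two easy pieces. Since $D_{t_n}$ does not depend on $x_n$ while $D_{t_n+}=F^{-1}(x_n+F(D_{t_n}))$ and $G'=F^{-1}$, the direct contribution to $\partial_{x_n}C^{(2)}$ from the $n$-th summand is exactly $D_{t_n+}$. Next, writing $D_{t_{n+1}}=u(D_{t_n+})$ with $u(y)=e^{-\tau\bar{\rho}(y)}y$, one has $u'(y)=e^{-\tau\bar{\rho}(y)}(1-\tau\bar{\rho}'(y)y)$, so together with $\partial D_{t_n+}/\partial x_n = 1/f(D_{t_n+})$ this yields
\begin{equation*}
\frac{\partial D_{t_{n+1}}}{\partial x_n}=\frac{a_n\bigl(1-\tau\bar{\rho}'(D_{t_n+})D_{t_n+}\bigr)}{f(D_{t_n+})}.
\end{equation*}

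The main work is translating the derivative of the tail sum $S_{n+1}:=\sum_{m=n+1}^{N}\bigl[G(x_m+F(D_{t_m}))-\tilde{F}(D_{t_m})\bigr]$ with respect to $D_{t_{n+1}}$ into the partial derivative $\partial C^{(2)}/\partial x_{n+1}$. I would split $S_{n+1}=\bigl[G(x_{n+1}+F(D_{t_{n+1}}))-\tilde{F}(D_{t_{n+1}})\bigr]+R$, where the remainder $R$ depends on $D_{t_{n+1}}$ only through $D_{t_{n+1}+}=F^{-1}(x_{n+1}+F(D_{t_{n+1}}))$. Using $\tilde{F}'(y)=yf(y)$ together with $\partial D_{t_{n+1}+}/\partial D_{t_{n+1}}=f(D_{t_{n+1}})/f(D_{t_{n+1}+})$ and $\partial D_{t_{n+1}+}/\partial x_{n+1}=1/f(D_{t_{n+1}+})$ one obtains
\begin{equation*}
\frac{\partial S_{n+1}}{\partial D_{t_{n+1}}}=f(D_{t_{n+1}})\bigl(D_{t_{n+1}+}-D_{t_{n+1}}\bigr)+\frac{f(D_{t_{n+1}})}{f(D_{t_{n+1}+})}\frac{\partial R}{\partial D_{t_{n+1}+}},
\end{equation*}
while $\partial C^{(2)}/\partial x_{n+1}=D_{t_{n+1}+}+f(D_{t_{n+1}+})^{-1}\,\partial R/\partial D_{t_{n+1}+}$. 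Eliminating $\partial R/\partial D_{t_{n+1}+}$ between the two identities collapses everything to
\begin{equation*}
\frac{\partial S_{n+1}}{\partial D_{t_{n+1}}}=f(D_{t_{n+1}})\left[\frac{\partial C^{(2)}}{\partial x_{n+1}}-D_{t_{n+1}}\right].
\end{equation*}

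Combining the direct contribution $D_{t_n+}$ with the chain product $(\partial S_{n+1}/\partial D_{t_{n+1}})\cdot(\partial D_{t_{n+1}}/\partial x_n)$ then produces the claimed recursion immediately. The main obstacle is the bookkeeping in the tail-sum step: cleanly separating the explicit dependence of the $(n+1)$-th summand on $D_{t_{n+1}}$ (through $F(D_{t_{n+1}})$ inside $G$ and through $-\tilde{F}(D_{t_{n+1}})$) from its implicit dependence via $D_{t_{n+1}+}$, and observing that the $D_{t_{n+1}+}\,f(D_{t_{n+1}})$ terms cancel so that only $-D_{t_{n+1}}\,f(D_{t_{n+1}})$ remains alongside the rescaled $\partial C^{(2)}/\partial x_{n+1}$ contribution.
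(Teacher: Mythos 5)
Your proof is correct and follows essentially the same route as the paper's: both exploit that $x_n$ enters the tail of the cost sum only through the single state $D_{t_{n+1}}$, and both reduce to the same chain-rule factor $a_n f(D_{t_{n+1}})\bigl(1-\tau\bar{\rho}'(D_{t_n+})D_{t_n+}\bigr)/f(D_{t_n+})$. The only difference is bookkeeping: the paper first proves the per-term identity relating $\partial_{x_n}D_{t_m}$ to $\partial_{x_{n+1}}D_{t_m}$ and then sums over $m$, whereas you aggregate the tail first and eliminate $\partial R/\partial D_{t_{n+1}+}$ between two identities; the underlying computation is identical.
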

\begin{proof}
First, observe
\begin{equation}\label{eq:diffD}
\frac{\partial}{\partial x_n}D_{t_m}=\frac{a_nf(D_{t_{n+1}})}{f(D_{t_n+})}\left(1-\tau\bar{\rho}'(D_{t_n+})D_{t_n+}\right)\frac{\partial}{\partial x_{n+1}}D_{t_m}
\end{equation}\\
for $n\in\{0,\dots,m-2\}$. This follows from
\begin{eqnarray}
&&\frac{\partial}{\partial x_n}D_{t_m}\\
&=&\frac{\partial}{\partial x_n}\left[a_{m-1}F^{-1}(x_{m-1}+F(\dots(a_n F^{-1}(x_n+F(D_{t_n})))\dots))\right]\nonumber\\
&=&\left[\prod_{k=n+1}^{m-1}\left[\frac{d}{d x}\bar{a}_{F^{-1}(x_k+F(x))}F^{-1}(x_k+F(x))\right]_{x=D_{t_k}}\right]\left[\frac{\partial}{\partial x_n}\bar{a}_{F^{-1}(x_n+F(D_{t_n}))} F^{-1}(x_n+F(D_{t_n}))\right]\nonumber\\
&=&\left[f(D_{t_{n+1}})\frac{\partial}{\partial x_{n+1}}D_{t_m}\right]\left[\frac{a_n\left(1-\tau\bar{\rho}'(D_{t_n+})D_{t_n+}\right)}{f(D_{t_n+})}\right].\nonumber
\end{eqnarray}
We use (\ref{eq:diffD}) and (\ref{eq:costTransform2}) for the transformation
\begin{eqnarray}\label{eq:diffD2}
&&\frac{\partial}{\partial x_n}C^{(2)}(x)\\
&=&F^{-1}(x_n+F(D_{t_n}))+\sum_{m = n+1}^N\frac{\partial}{\partial x_n}\left[G(x_m+F(D_{t_m}))-\tilde{F}(D_{t_m})\right]\nonumber\\
&=&D_{t_n+}+\sum_{m = n+1}^N f(D_{t_m})\left[\frac{\partial}{\partial x_n}D_{t_m}\right]\left[F^{-1}(x_m+F(D_{t_m}))-D_{t_m}\right]\nonumber\\
&=&D_{t_n+}+\frac{a_nf(D_{t_{n+1}})}{f(D_{t_n+})}\left[1-\tau\bar{\rho} '(D_{t_n+})D_{t_n+}\right]\Big(D_{t_{n+1}+}-D_{t_{n+1}}\nonumber\\
&&+\sum_{m = n+2}^Nf(D_{t_m})\left[\frac{\partial}{\partial x_{n+1}}D_{t_m}\right]\left[F^{-1}(x_m+F(D_{t_m}))-D_{t_m}\right]\Big).\nonumber
\end{eqnarray}
Now, the same calculation for $\partial C^{(2)}(x) / \partial x_{n+1}$ results in
\begin{equation}\label{eq:diffD3}
\frac{\partial}{\partial x_n}D_{t_m} = D_{t_{n+1}+}+\sum_{m = n+2}^N f(D_{t_m})\left[\frac{\partial}{\partial x_{n+1}}D_{t_m}\right]\left[F^{-1}(x_m+F(D_{t_m}))-D_{t_m}\right],
\end{equation}
and combining (\ref{eq:diffD2}) and (\ref{eq:diffD3}) yields the desired result.
\end{proof}
Finally, we are prepared to prove Theorem \ref{thm:weiss2}.
\begin{lemma}
Strategy $\xi^{(2)}$ is the unique minimizer of function $C^{(2)}$ and all components of $\xi^{(2)}$ are positive.
\end{lemma}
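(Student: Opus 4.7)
The plan is to mirror the Version~1 argument step by step, with Lemmas \ref{lem:minExists2}, \ref{lem:zero2}, and \ref{lem:diff2} playing the roles that Lemmas \ref{lem:minExists1}, \ref{lem:zero}, and the direct differentiation of (\ref{eq:costTransform}) played there. First I would invoke Lemma \ref{lem:minExists2} to secure a minimizer $\xi^*=(x_0^*,\dots,x_N^*)\in\Xi$. Since the constraint $\sum x_n=X_0$ is linear, there must exist a Lagrange multiplier $\nu\in\mathbb{R}$ with $\partial C^{(2)}(\xi^*)/\partial x_n=\nu$ for every $n\in\{0,\dots,N\}$.

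Second, I would feed the equalities $\partial_n C^{(2)}=\partial_{n+1}C^{(2)}=\nu$ into the recursive formula of Lemma \ref{lem:diff2}. Writing $x:=D_{t_n+}$ and using the Version~2 dynamics $D_{t_{n+1}}=\bar{a}_x x$, the identity becomes
$$\nu = x + \frac{\bar{a}_x f(\bar{a}_x x)\bigl(1-\tau\bar{\rho}'(x)x\bigr)}{f(x)}\bigl[\nu - \bar{a}_x x\bigr],$$
and a short rearrangement collapses this to $\nu\,k(x) = x\,k_2(x)$, where $k$ and $k_2$ are the auxiliary functions introduced in the proof of Lemma \ref{lem:zero2}. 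Since $k(x)>0$ by that same lemma, this is exactly $\nu=h_2(D_{t_n+})$. Because $h_2$ is one-to-one by assumption, the values $D_{t_n+}$ must coincide for $n=0,\dots,N-1$ at a common $D^*=F^{-1}(x_0^*)$, and the intermediate orders are forced to satisfy $x_n^*=F(D^*)-F(\bar{a}_{D^*}D^*)=x_0^*-F\bigl(\bar{a}_{F^{-1}(x_0^*)}F^{-1}(x_0^*)\bigr)$, matching (\ref{eq:inOrders2}).

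Third, I would reduce to one dimension. The constraint $\sum x_n^*=X_0$ together with the formulas above gives $x_N^*+F(\bar{a}_{D^*}D^*)=X_0-N[x_0^*-F(\bar{a}_{D^*}D^*)]$, and combining this with the terminal identity $\partial_N C^{(2)}(\xi^*)=D_{t_N+}=\nu=h_2(D^*)$ reproduces (\ref{eq:mainEq2}). Lemma \ref{lem:zero2} then guarantees that this equation has a unique, positive solution, yielding both the uniqueness of $\xi^*$ and the bound $x_0^*>0$. Positivity of $x_1^*,\dots,x_{N-1}^*$ is immediate from $\bar{a}_{D^*}D^*<D^*$ (recall $\bar{a}_{D^*}\in(0,1)$) and the strict monotonicity of $F$. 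For the final order, $x_N^*>0$ is equivalent to $D_{t_N+}>D_{t_N}=\bar{a}_{D^*}D^*$; but $D_{t_N+}=h_2(D^*)=D^*\,k_2(D^*)/k(D^*)$, and since Lemma \ref{lem:zero2} yields $k_2(D^*)>k(D^*)>0$, we indeed get $h_2(D^*)>D^*>\bar{a}_{D^*}D^*$.

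The main technical obstacle I anticipate is the algebraic collapse in step two: the correction factors $(1-\tau\bar{\rho}'(x)x)$, which are absent from the original AFS Version~2 proof, have to conspire with the Version~2 dynamics to recover exactly the numerator and denominator defining $h_2$. This is precisely where assumption (\ref{eq:rhoAss2}) and the positivity of $k$ established in Lemma \ref{lem:zero2} earn their keep; matching the signs and factors there is what makes the Lagrange system reduce cleanly to a single scalar equation $\nu=h_2(D_{t_n+})$.
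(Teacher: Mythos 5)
Your proposal is correct and follows essentially the same route as the paper: existence via Lemma \ref{lem:minExists2}, the Lagrange system fed into the recursion of Lemma \ref{lem:diff2} to force $\nu=h_2(D_{t_n+})$ and hence equal impacts $D_{t_n+}=D^*$, reduction to the scalar equation (\ref{eq:mainEq2}) handled by Lemma \ref{lem:zero2}, and the same positivity arguments (including $h_2(D^*)=D^*k_2(D^*)/k(D^*)>D^*$ for the last order). The only cosmetic difference is that you close the system with the terminal condition $\partial_N C^{(2)}=D_{t_N+}=\nu$ instead of differentiating the reduced one-dimensional cost $C^{(2)}_0(x_0)$ as the paper does; both yield the identical scalar equation.
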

\begin{proof}
Lemma \ref{lem:minExists2} guarantees the existence of at least one optimal strategy $\xi^*\in\Xi$. By standard arguments, there is a Lagrange multiplier $\nu\in\mathbb{R}$ such that
\begin{equation}
\frac{\partial}{\partial x^*_n}C^{(2)}(\xi^*)=\nu\textrm{\ \ \ for }n\in\{0,\dots,N\}.
\end{equation}
We use Lemma \ref{lem:diff2} to get
\begin{equation}
\nu=D_{t_n+}+\frac{a_nf(a_nD_{t_n+})}{f(D_{t_n+})}(1-\tau\bar{\rho}'(D_{t_n+})D_{t_n+})\left[\nu-a_nD_{t_n+}\right]
\end{equation}
\begin{equation}
\Leftrightarrow \nu = D_{t_n+}\frac{f(D_{t_n+})-a_n^2f(a_nD_{t_n+})(1-\tau\bar{\rho}'(D_{t_n+})D_{t_n+})}{f(D_{t_n+})-a_nf(a_nD_{t_n+})(1-\tau\bar{\rho}'(D_{t_n+})D_{t_n+})}=h_2(D_{t_n+})
\end{equation}
for $n\in\{0,\dots,N-1\}$. Function $h_2$ is one-to-one, and thus,
\begin{equation}
\nu=h_2(F^{-1}(x^*_n+F(D_{t_n})))
\end{equation}
implies that $x^*_n+F(D_{t_n})$ does not depend on $n\in\{0,\dots,N-1\}$. Consequently, $D_{t_n+}=F^{-1}(x^*_n+F(D_{t_n}))$ is constant in $n$ such that we can conclude
\begin{eqnarray}
x^*_0&=&F(h_2^{-1}(\nu)),\\
x^*_n&=&x^*_0-F(D_{t_n})=x^*_0-F(a_0F^{-1}(x^*_0))\textrm{ for }n\in\{1,\dots,N-1\}\label{eq:xnFromx02},\\
x^*_N&=&X_0-x^*_0-(N-1)\left[x^*_0-F(a_0F^{-1}(x^*_0))\right].
\end{eqnarray}
The value $x^*_0$ determines the optimal solution completely, and thus, it must minimize
\begin{eqnarray*}
&&C^{(2)}_0(x_0)\nonumber\\
&:=&C^{(2)}\left(x_0, x_0-F\left(a_0F^{-1}(x_0)\right),\dots,X_0-x_0-\left(N-1\right)\left[x_0-F(a_0F^{-1}(x_0))\right]\right)\nonumber\\
&\stackrel{(\ref{eq:costTransform2})}{=}&G(x_0)+\sum_{n=1}^{N-1}\left[G(x_0)-\tilde{F}(a_0F^{-1}(x_0))\right]\nonumber\\
&&\phantom{G(x_0)}+G(X_0-N[x_0-F(a_0F^{-1}(x_0))])-\tilde{F}(a_0F^{-1}(x_0))\nonumber\\
&=&N[G(x_0)-\tilde{F}(a_0F^{-1}(x_0))]+G(X_0-N[x_0-F(a_0F^{-1}(x_0))]).\nonumber
\end{eqnarray*}
\ \\
Differentiation results in
\begin{eqnarray}
\frac{dC^{(2)}_0(x_0)}{dx_0} &=&N\Bigg[D_{0+}-a_0^2D_{0+}\frac{f(D_{t_1})}{f(D_{0+})}\left(1-\tau\bar{\rho}'(D_{0+})D_{0+}\right)\\
&&\phantom{N\Bigg[}+D_{t_n+}\left(a_0\frac{f(D_{t_1})}{f(D_{0+})}\left(1-\tau\bar{\rho}'(D_{0+})D_{0+}\right)-1\right)\Bigg]\nonumber
\end{eqnarray}
such that we can calculate the minimizer by
\begin{equation}
\begin{array}{lcll}
&\frac{d}{dx^*_0}C^{(2)}_0(x^*_0)&=&0\label{eq:finEq}\\
\Leftrightarrow&D_{t_N+}&=&D_{0+}\frac{f(D_{0+})-a_0^2f(D_{t_1})(1-\tau\bar{\rho}'(D_{0+})D_{0+})}{f(D_{0+})-a_0f(D_{t_1})(1-\tau\bar{\rho}'(D_{0+})D_{0+})}.
\end{array}
\end{equation}
The left hand side of the last line can be rewritten as
\begin{eqnarray}
D_{t_N+}&=&F^{-1}(F(D_{t_N})+x^*_N)\\
&=&F^{-1}(F(D_{t_1})+X_0-x^*_0-(N-1)(x^*_0-F(D_{t_1})))\nonumber\\
&=&F^{-1}(X_0-N(x_0^*-F(D_{t_1}))\nonumber
\end{eqnarray}
and the right hand side is just $h_2(F^{-1}(x_0^*))$. We know by Lemma \ref{lem:zero2} that equation (\ref{eq:finEq}) has at most one zero such that we are finished with the existence, uniqueness and representation of the optimal strategy.

At last, we show that all components of this strategy are positive. We already know $x^*_0>0$ and thus also $x^*_n>0$ for all $n\in\{1,\dots,N-1\}$ by (\ref{eq:xnFromx02}). For the positivity of $x^*_N$, we transform (\ref{eq:finEq}) into
\begin{equation}
D_{t_N+}=D_{0+}\left[1+\frac{a_0f(a_0D_{0+})-a_0^2f(a_0D_{0+})}{f(D_{0+})-a_0f(a_0D_{t_{0+}})(1-\tau\bar{\rho}'(D_{0+})D_{0+})}\left(1-\tau\bar{\rho}'(D_{0+})D_{0+}\right)\right].
\end{equation}
The fraction on the right hand side is strictly positive by Lemma \ref{lem:zero2}; positivity of $x^*_N$ follows from
\begin{equation}
D_{t_N+}>D_{0+}=\frac{D_{t_N}}{a_0}>D_{t_N}.
\end{equation}
\end{proof}
\bibliographystyle{alpha}
\bibliography{weiss-liquidation}
\end{document}